\title{A Term-based Approach for Generating Finite Automata from Interaction Diagrams}
\author{Erwan Mahe}{Université Paris-Saclay, CEA, List, F-91120, Palaiseau, France}{erwan.mahe@cea.fr}{https://orcid.org/0000-0002-5322-4337}{}
\author{Boutheina Bannour}{Université Paris-Saclay, CEA, List, F-91120, Palaiseau, France}{boutheina.bannour@cea.fr}{https://orcid.org/0000-0002-4943-7807}{}
\author{Christophe Gaston}{Université Paris-Saclay, CEA, List, F-91120, Palaiseau, France}{christophe.gaston@cea.fr}{https://orcid.org/0000-0001-6865-5108}{}
\author{Arnault Lapitre}{Université Paris-Saclay, CEA, List, F-91120, Palaiseau, France}{arnault.lapitre@cea.fr}{https://orcid.org/0000-0002-2185-4051}{}
\author{Pascale Le Gall}{Université Paris-Saclay, CentraleSupélec, F-91192, Gif-sur-Yvette, France}{pascale.legall@centralesupelec.fr}{https://orcid.org/0000-0002-8955-6835}{}
\authorrunning{E. Mahe et al.} 
\keywords{Interaction language, Sequence Diagram, Message Sequence Chart, Non-deterministic Finite Automaton} 
\definecolor{darkspringgreen}{rgb}{0.09, 0.45, 0.27}
\definecolor{hibou_col_lf}{RGB}{22, 22, 130}
\newcommand{\hlf}[1]{\textcolor{hibou_col_lf}{#1}}
\definecolor{hibou_col_ms}{RGB}{15, 86, 15}
\newcommand{\hms}[1]{\textcolor{hibou_col_ms}{#1}}
\newcommand{\shortColRed}[1]{\textcolor{red}{#1}}
\newcommand{\shortColGreen}[1]{\textcolor{darkspringgreen}{#1}}
\newcommand{\shortColOrange}[1]{\textcolor{orange}{#1}}
\newcommand{\shortColViolet}[1]{\textcolor{violet}{#1}}
\DeclareRobustCommand\doubleVerticalTimesDefault{%
  \leavevmode
  {\sbox0{\ddag}%
   \ooalign{\raisebox{\ht0-\height}{$\times$}\cr
            \raisebox{\depth-\dp0}{\scalebox{1}[-1]{$\times$}}\cr}%
  }%
}
\newcommand{\doubleVerticalTimes}{\scalerel*{\doubleVerticalTimesDefault}{b}}
\newcommand{\evadesLfsBase}{\downarrow}
\newcommand{\evadesLfs}[1]{\downarrow^{#1}}
\newcommand{\isPruneBase}{\mathrlap{\raisebox{-.125\height}{\doubleVerticalTimes}}\rightarrow}
\newcommand{\isPruneOf}[1]{\mathrlap{\raisebox{-.125\height}{\doubleVerticalTimes}}\xrightarrow{#1}}
\newcommand{\reachableInts}{\mathtt{reach}}
\newcommand{\nfa}{\mathtt{nfa}}
\newcommand{\nfacompo}{\mathtt{compo}}
\begin{document}

\maketitle

\begin{abstract}
Non-deterministic Finite Automata (NFA) represent regular languages concisely, increasing their appeal for applications such as word recognition.
This paper proposes a new approach to generate NFA from an interaction language such as UML Sequence Diagrams or Message Sequence Charts. Via an operational semantics, we generate a NFA from a set of interactions reachable using the associated execution relation.
In addition, by applying simplifications on reachable interactions to merge them, it is possible to obtain reduced NFA without relying on costly NFA reduction techniques. Experimental results regarding NFA generation and their application in trace analysis are also presented.
\end{abstract}

\section{Introduction}

Interactions are behavioral models describing communication flows between actors.
Interaction languages include, among others, Message Sequence Charts (MSC) \cite{a_theory_of_regular_msc_languages,operational_semantics_for_msc}, or UML Sequence Diagrams (UML-SD) \cite{the_many_meanings_of_uml2_sd_a_survey}. Their main advantage is their easy-to-read graphical representation.
Let us describe their main elements using the sequence diagram drawn in the top left rectangle of Fig.\ref{fig:coreg_nfa_example}. Each actor (here $\hlf{l_1}$, $\hlf{l_2}$ and $\hlf{l_3}$) is associated to a vertical line, called a {\em lifeline}.
Behaviors are described as {\em traces} i.e. successions of atomic communication actions (abbrv. {\em actions}) which are either the emission of a message $\hms{m}$ from a lifeline $\hlf{l}$ or the reception of $\hms{m}$ by $\hlf{l}$.
In the diagrammatic representation, the emission (resp. reception) of a message is represented by an arrow exiting (resp. entering) a lifeline. 
By extension, message exchanges (i.e. an emission and a corresponding reception) are represented by contiguous horizontal arrows connecting two lifelines.
The name of the exchanged message is represented above the corresponding arrow (e.g. $\hms{m_1}$, $\hms{m_2}$ and $\hms{m_3}$ in our example).

Using an asynchronous interpretation of exchanges, these arrows impose a causal order which is that the emission must occur before the corresponding reception (on our example, $\hlf{l_1}$ must emit $\hms{m_1}$ before $\hlf{l_2}$ can receive it).
High-level operators such as various kinds of sequencing, parallel composition, choice and repetition can be used to schedule these atomic exchanges into more complex structured scenarios.
A particularity of interactions is the distinction between strict and weak sequencing. 
While the former always enforces an order between actions, the latter only does so between actions occurring on the same lifeline.
Concurrent-regions (abbrv. {\em co-regions}) further improve expressiveness with the ability to detail which lifelines allow interleavings and which ones do not. In practice co-regions behave like parallel composition on certain lifelines and like weak sequencing on the others.  

Most operators are drawn as annotated boxes (for instance, $loop_S$ denotes a strictly sequential loop in our example).
This is not the case for the weak sequencing operator, which corresponds to the top-to-bottom direction of the diagram and for concurrent regions (co-regions) which are represented using brackets on specific lifelines (e.g. on $\hlf{l_2}$ in our example).
Our example illustrates weak sequencing in two manners.
If one considers the order of the emission of $\hms{m_2}$ w.r.t. that of $\hms{m_3}$, because they both occur on $\hlf{l_3}$, weak sequencing forces $\hlf{l_3}$ to emit $\hms{m_2}$ before it can emit any instance of $\hms{m_3}$.
By contrast, if one considers the emission of $\hms{m_1}$ and that of $\hms{m_2}$, because they resp. occur on $\hlf{l_1}$ and $\hlf{l_3}$, weak sequencing allows them to occur in any order.
In our example, the loop allows arbitrarily many instances of the reception of $\hms{m_3}$ to occur sequentially.
The co-region on $\hlf{l_2}$ then allows the reception of $\hms{m_1}$ to occur before, in-between or after any of these instances of the reception of $\hms{m_3}$.

%

\begin{wrapfigure}{l}{0pt}
    \centering
    \scalebox{.8}{\input{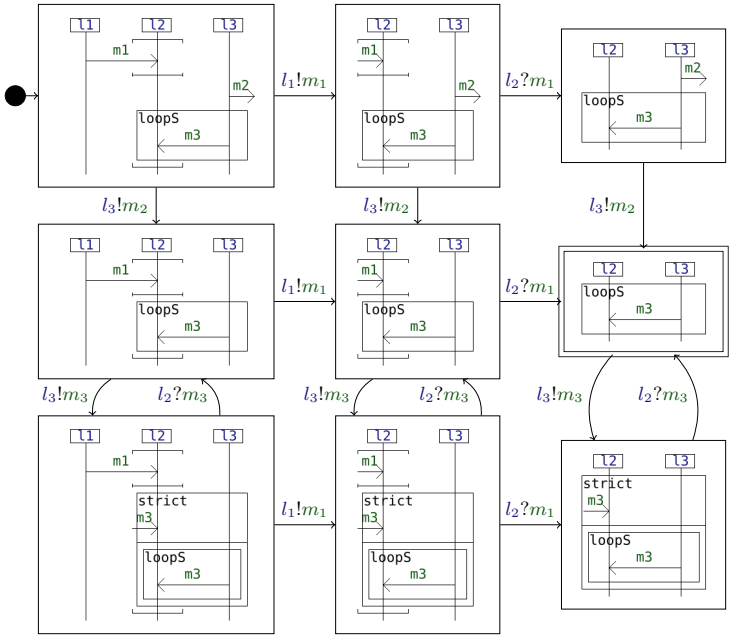}}
    \caption{NFA obtained from $i_0$ drawn in the top left state}
    \label{fig:coreg_nfa_example}
\end{wrapfigure}

Interestingly, for a subset of interactions, associated trace languages are regular. Therefore, associating them with Finite Automata (FA) is possible.
In the literature, translations from interaction to FA generally rely on linearization of partial orders \cite{model_checking_of_message_sequence_charts} and composition, matching interaction operators to FA ones \cite{runtime_monitoring_of_web_service_conversations}.
In this paper, we explore an alternative approach, taking advantage of a representation of interactions as terms \cite{equivalence_of_denotational_and_operational_semantics_for_interaction_languages} in a process algebraic style. 
Using this representation we can build in the manner of \cite{partial_derivatives_of_regular_expressions_and_finite_automata_contructions,generating_optimal_monitors_for_extended_regular_expressions} a NFA in which each state corresponds to an equivalence class of such terms.
Fig.\ref{fig:coreg_nfa_example} illustrates the approach by representing a NFA obtained in this manner.
Each of the $9$ states is associated with a distinct interaction and is represented by a rectangle containing the diagrammatic representation of that interaction.
The initial state is the target of the transition exiting $\bullet$, and rectangles with a doubled border correspond to accepting states. 
Transitions are labelled by actions denoted as either $\hlf{l}!\hms{m}$ or $\hlf{l}?\hms{m}$ for resp. the emission or the reception of message $\hms{m}$ occurring on lifeline $\hlf{l}$.
These actions are immediately executable from the interaction in the source state while the one in the target state specifies continuations of traces specified by the former.

This approach is novel and has three advantages over the literature: \textbf{(1)} the translation of a more expressive interaction language, \textbf{(2)} the ability to directly generate NFA with few states without relying on state reduction algorithms and \textbf{(3)} traceability in monitoring because a deviation from the NFA can be uniquely traced-back to an instant in the unfolding of a global scenario represented by an interaction.

The paper is organized as follows.
After some preliminaries in Sec.\ref{sec:prelim}, we introduce our interaction language and its semantics in Sec.\ref{sec:interaction_language}. 
Sec.\ref{sec:gen_nfa} covers our approach for NFA generation and Sec.\ref{sec:experiments} presents experimental results.
Finally, after some remarks and related works in Sec.\ref{sec:related_works}, we conclude in Sec.\ref{sec:conclusion}.

\section{Preliminaries\label{sec:prelim}}

For a set $A$, $\mathcal{P}(A)$ and $A^*$ resp. denote the sets of all subsets of $A$ and of all finite sequences over $A$.
"$.$" denotes the concatenation operation and $\varepsilon$ the empty sequence.
For $w \in A^*$ and for $a \in A$, we resp. denote by $|w|$ and $|w|_a$ the length of $w$ and the number of occurrences of $a$ in $w$.
A word $u$ is said to be a prefix of $w$ if there exists a word $v$ such that $w =u.v$.

Given two sets $X$ and $Y$, $X \times Y$ denotes their Cartesian product and $Y^X$ denotes the set of all functions $\phi : X \rightarrow Y$ of domain $X$ and codomain $Y$.

\textbf{Non-deterministic Finite Automata (NFA).} Def.\ref{def:epsnfa} recalls the definition of NFA.  
A transition $(q,x,q')$, with $q$ and $q'$ states in $Q$ and $x \in A$ is denoted as $q \overset{x}{\rightsquigarrow} q'$ and is said to be labelled by $x$.
A path from $q$ to $q'$ is a finite sequence of $k > 0$ consecutive transitions $q_i \overset{x_{i+1}}{\rightsquigarrow} q_{i+1}$ with $i \in [0,k-1]$ s.t. $q_0 = q$ and $q_k = q'$. If we denote by $w$ the word obtained from concatenating their labels i.e. $w = x_1.\cdots.x_k$, we may then write $q \overset{w}{\rightsquigarrow} q'$. By extension, we have $q \overset{\varepsilon}{\rightsquigarrow} q$ for any $q \in Q$.
The language recognized by a NFA $\mathcal{A} =(A,Q,q_0,F,\rightsquigarrow)$ is then the set of words $\mathcal{L}(\mathcal{A}) = \{ w \in A^* ~|~ \exists~q_f \in F ~s.t.~ q_0 \overset{w}{\rightsquigarrow} q_f \}$. 

\begin{definition}
\label{def:epsnfa}
A NFA is a tuple $(A,Q,q_0,F,\rightsquigarrow)$ s.t. $A$ is a finite alphabet, $Q$ is a finite set of states, $q_0 \in Q$ is an initial state, $F \subseteq Q$ is a set of accepting states and $\rightsquigarrow \subseteq Q \times A \times Q$ is a finite set of transitions.
\end{definition}

\textbf{Term rewriting.}
A set of operation symbols $\mathcal{F}$ is a structured set $\mathcal{F} = \bigcup_{\substack{j \geq 0}} \mathcal{F}_j$ s.t. for any integer $j \geq 0$, the set $\mathcal{F}_j$ is that of symbols of arity~$j$. Symbols of arity $0$ are constants.
For any set of variables $\mathcal{X}$, we denote by $\mathcal{T}_\mathcal{F}(\mathcal{X})$ the set of terms over $\mathcal{X}$ defined as the smallest set s.t. \textbf{(1)} $\mathcal{F}_0 \cup \mathcal{X} \subset \mathcal{T}_{\mathcal{F}}(\mathcal{X})$ and \textbf{(2)} for any symbol $f \in \mathcal{F}_j$ of arity $j>0$ and for any terms $t_1,\cdots,t_j$ from $\mathcal{T}_{\mathcal{F}}(\mathcal{X})$, $f(t_1,\cdots,t_j) \in \mathcal{T}_{\mathcal{F}}(\mathcal{X})$. 
$\mathcal{T}_\mathcal{F} = \mathcal{T}_\mathcal{F}(\emptyset)$ denotes the set of ground terms.
A function $\phi \in \mathcal{T}_\mathcal{F}^\mathcal{X}$ is extended as a substitution $\phi \in \mathcal{T}_\mathcal{F}^{\mathcal{T}_\mathcal{F}(\mathcal{X})}$ s.t. $\forall~t \in \mathcal{F}_0$, $\phi(t) = t$ 
and for all term the form $f(t_1,\cdots,t_j)$ with $f \in \mathcal{F}_j$,
$\phi(f(t_1,\cdots,t_j)) = f(\phi(t_1),\cdots,\phi(t_n))$.

For any $t \in \mathcal{T}_\mathcal{F}$, we denote by $pos(t) \in \mathcal{P}(\mathbb{N}^*)$ its set of positions \cite{dershowitz_rewrite_systems} which is s.t.
$\forall~t \in \mathcal{F}$ of the form $f(t_1,\cdots,t_j)$ with $f \in \mathcal{F}_j$ we have $pos(f(t_1,\cdots,t_j)) = \{ \varepsilon \} \cup \bigcup_{k \in [1,j]} \{k.p ~|~ p \in pos(t_k)\}$.
For any terms $t$ and $s$ and any position $p \in pos(t)$, $t_{|p}$ denotes the sub-term of $t$ at position $p$ while $t[s]_p$ denotes the term obtained by substituting $t_{|p}$ with $s$ in $t$.

A rewrite rule $x \leadsto y$ is a syntactic entity which relates two terms $x$ and $y$ from $\mathcal{T}_\mathcal{F}(\mathcal{X})$.
A set of rewrite rules $R$ characterizes a Term Rewrite System (TRS) (see Def.\ref{def:trs}) $\rightarrow_R$ which relates ground terms. These terms are s.t. we can obtain the right-hand-side by applying a rewrite rule modulo a substitution at a specific position in the left-hand-side i.e. we have $t \rightarrow_R t[\phi(y)]_p$ with $x \leadsto y$ in $R$ and $t_{|p} = \phi(x)$.
For instance the integer expression simplification $3 + (5 + 0) \rightarrow_R 3 + 5$ can be obtained using $R = \{x + 0 \leadsto x\}$ on $\mathcal{X} = \{x\}$ by applying $\leadsto$ with $\phi(x) = 5$ at position $2$ within $3 + (5 + 0)$.

\begin{definition}
\label{def:trs}
A set $R$ of rewrite rules over $\mathcal{T}_\mathcal{F}(\mathcal{X})$ defines a TRS as a relation:
\[
\rightarrow_R
=
\left\{
(t,t') \in \mathcal{T}_\mathcal{F} \times \mathcal{T}_\mathcal{F}
~\middle|~
\begin{array}{l}
\exists~p \in pos(t),~\exists~(x,y) \in R,~\exists~\phi \in \mathcal{T}_\mathcal{F}^{\mathcal{T}_\mathcal{F}(\mathcal{X})}\\
t_{|p} = \phi(x) \text{ and } t' = t[\phi(y)]_p
\end{array}
\right\}
\]
\end{definition}


Given a TRS $\rightarrow_R$, a term $t \in \mathcal{T}_\mathcal{F}$ is irreducible iff there are no $t' \neq t$ s.t. $t \rightarrow_R t'$.
We denote by $\rightarrow_R^*$ the reflexive and transitive closure of $\rightarrow_R$. A TRS is convergent iff all consecutive applications of $\rightarrow_R$ from a term $t$ necessarily converge to the same irreducible term, i.e. iff $\exists!~t' \in \mathcal{T}_\mathcal{F} ~\text{s.t.}~ t \rightarrow_R^* t'$ and $t'$ is irreducible.
For a convergent TRS $\rightarrow_R$, given a term $t$, the notation $t \rightarrow_R^! t'$ signifies that $t'$ is the unique irreducible term s.t. $t \rightarrow_R^* t'$.

\section{Interactions and their execution\label{sec:interaction_language}}

\subsection{Syntax and principle of the semantics}

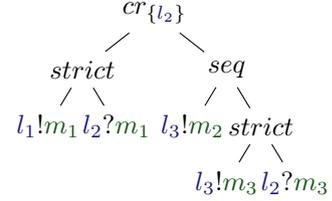
\begin{wrapfigure}{r}{0pt}
    \centering
    \raisebox{0pt}[\dimexpr\height-5\baselineskip\relax]{%
    \scalebox{1}{\begin{tikzpicture}[every node/.style = {shape=rectangle, align=center}]
\node (o) {$cr_{\{\hlf{l_2}\}}$} [level distance=0.75cm,sibling distance=1.9cm]
  child { node (o1) {$strict$} [sibling distance=.9cm]
    child { node (o11) {$\hlf{l_1}!\hms{m_1}$} }
    child { node (o12) {$\hlf{l_2}?\hms{m_1}$} }
  }
  child { node (o2) {$seq$} [sibling distance=.9cm]
    child { node (o21) {$\hlf{l_3}!\hms{m_2}$} }
    child { node (o22) {$strict$} [sibling distance=.9cm]
      child { node (o221) {$\hlf{l_3}!\hms{m_3}$} }
      child { node (o222) {$\hlf{l_2}?\hms{m_3}$} }
    }
  };
\end{tikzpicture}}
    }
    \caption{$i_0$ as a term}
    \label{fig:rv_filrouge_term}
\end{wrapfigure}

The basic notion to describe communications within a concurrent system is that of atomic communication actions that can be observed at the interfaces of its sub-systems.
Those actions correspond to either the emission or the reception of a message.
We use a finite set $L$ of lifelines to describe communication interfaces on which those actions occur (typically one lifeline $l$ for each sub-system).
A finite set $M$ of messages abstracts away all messages that can be transmitted. 
The set of actions is then denoted by $\mathbb{A}=\{ l \Delta m ~|~ l \in L,~\Delta \in \{!,?\},~m \in M \}$. 
Notations $!$ and $?$ resp. denote emissions and receptions.
For any $a \in \mathbb{A}$ of the form $l\Delta m$ with $\Delta \in \{!,?\}$, $\theta(a)$ refers to the lifeline $l$ on which it occurs.
Sequences of actions called {\em traces} characterize executions of systems. 
We denote the set of all traces by $\mathbb{T} = \mathbb{A}^*$.

As done in \cite{equivalence_of_denotational_and_operational_semantics_for_interaction_languages}, we encode interactions as ground terms $\mathcal{T}_\mathcal{F}$ of a language with constants $\mathcal{F}_0$ being either atomic communication actions (elements $a \in \mathbb{A}$) or the empty interaction (denoted as $\varnothing$) which expresses the empty behavior $\varepsilon$. Symbols of arities $1$ and $2$ are then used to encode high level operators.
Def.\ref{def:interaction_language_syntax} formalizes our encoding in the fashion of \cite{equivalence_of_denotational_and_operational_semantics_for_interaction_languages}.
The language includes strict sequencing $strict$, non-deterministic choice $alt$, strictly sequential repetition $loop_S$ and a set of concurrent-region operators $cr_\ell$ one per subset $\ell \subseteq L$ of lifelines.
So as to relate to the more familiar weak sequencing $seq$ and interleaving $par$ operators (see \cite{equivalence_of_denotational_and_operational_semantics_for_interaction_languages}) which are particular cases of co-regions, we denote in the following $cr_\emptyset$ by $seq$ and $cr_L$ by $par$.
Fig.\ref{fig:rv_filrouge_term} illustrates one such interaction term, which encodes as a term the diagrammatic representation of our running example (drawn at the top left of Fig.\ref{fig:coreg_nfa_example}).

\begin{definition}
\label{def:interaction_language_syntax}
The set $\mathbb{I}$ of interactions is the set of ground terms $\mathcal{T}_\mathcal{F}$ built over $\mathcal{F}$ s.t.:
\noindent\[
\mathcal{F}_0 = \mathbb{A} \cup \{\varnothing\}
~~~~~~~
\mathcal{F}_1 = \{loop_S\}
~~~~~~~
\mathcal{F}_2 = \{strict,alt\} \cup \bigcup_{\ell \subset L} \{cr_\ell\}
~~~~~~~
\forall~k > 2, ~\mathcal{F}_k = \emptyset
\]
\end{definition}

In this paper, we define a method to build a NFA such as the one from Fig.\ref{fig:coreg_nfa_example} from a term expressed in the language from Def.\ref{def:interaction_language_syntax} which encodes a sequence diagram.
In order to determine the transitions of that NFA we rely on a relation of the form $i \xrightarrow{a} i'$ where $i$ is an interaction, $a$ is an action which can immediately occur within behaviors specified by $i$ and $i'$ is an interaction specifying all the continuations of behaviors specified by $i$ that start with the occurrence of $a$.
This kind of relation corresponds to execution relations of structural operational semantics \cite{a_structural_approach_to_operational_semantics}.


In \cite{equivalence_of_denotational_and_operational_semantics_for_interaction_languages}, we have defined such a semantics for interactions in the style of Plotkin \cite{a_structural_approach_to_operational_semantics}.
The following introduces a compact rewording of that semantics with two new contributions: \textbf{(1)} the introduction of a delayed-choice rule from \cite{high_level_message_sequence_charts} which allows the generation of smaller NFA and \textbf{(2)} the handling of concurrent regions.

In order to express that semantics we define several predicates inductively on the term structure of interactions. For any $i$ and $i'$ in $\mathbb{I}$, any $\ell \subseteq L$ and any $a \in \mathbb{A}$, these are:
\begin{itemize}
    \item an evasion predicate $i \evadesLfs{\ell}$ which determines whether or not $i$ is able to express traces which involve no action occurring on a lifeline of $\ell$
    \item a pruning relation $i \isPruneOf{\ell} i'$ which characterizes an interaction $i'$ that exactly specifies all the traces expressed by $i$ that involve no action occurring on a lifeline of $\ell$
    \item and finally the non-expression $i \centernot{\xrightarrow{a}}$ and expression $i \xrightarrow{a} i'$ relations.
\end{itemize}

\subsection{Structural operational semantics}

We say that an interaction $i \in \mathbb{I}$ evades a set of lifelines $\ell \subseteq L$ if it can express traces that contain no actions occurring on lifelines of $\ell$. Its definition in Def.\ref{def:evasion} follows directly from the meaning of the involved operators: for example, any loop term $loop_S(i)$ evades any set of lifelines $\ell \subseteq L$ because it specifies the empty behavior $\varepsilon$ which corresponds to repeating 0 times the loop.

\begin{definition}[Evasion]
\label{def:evasion}
The predicate $\evadesLfsBase ~\subseteq \mathbb{I} \times \mathcal{P}(L)$ is s.t. for any $\ell \subseteq L$, any $i_1$ and $i_2$ from $\mathbb{I}$, any $f \in \{strict\} \cup \bigcup_{\ell' \subseteq L} \{cr_{\ell'} \}$ we have:

{
\centering

\noindent\begin{minipage}{1cm}
\begin{prooftree}
\AxiomC{\vphantom{$\theta\evadesLfs{\ell}$}}
\UnaryInfC{$\varnothing \evadesLfs{\ell}$}
\end{prooftree}
\end{minipage}
\begin{minipage}{2cm}
\begin{prooftree}
\AxiomC{\vphantom{$\evadesLfs{\ell}$}$\theta(a) \not\in \ell$}
\UnaryInfC{$a \evadesLfs{\ell}$}
\end{prooftree}
\end{minipage}
\begin{minipage}{4cm}
\begin{prooftree}
\AxiomC{\vphantom{$\theta$}$i_j \evadesLfs{\ell}$}
\RightLabel{$j \in \{1,2\}$}
\UnaryInfC{$alt(i_1,i_2) \evadesLfs{\ell}$}
\end{prooftree}
\end{minipage}
\begin{minipage}{3cm}
\begin{prooftree}
\AxiomC{$i_1 \evadesLfs{\ell}$}
\AxiomC{$i_2 \evadesLfs{\ell}$}
\BinaryInfC{$f(i_1,i_2) \evadesLfs{\ell}$}
\end{prooftree}
\end{minipage}
\begin{minipage}{3cm}
\begin{prooftree}
\AxiomC{\vphantom{$\evadesLfs{\ell}$}}
\UnaryInfC{$loop_S(i_1) \evadesLfs{\ell}$}
\end{prooftree}
\end{minipage}

}
\end{definition}

The pruning relation from Def.\ref{def:pruning_relation} characterizes for any interaction $i$, the existence and unicity of another interaction $i'$ which accepts exactly all executions of $i$ that do not involve any lifelines in $\ell \subseteq L$.

\begin{definition}[Pruning]\label{def:pruning_relation}
The pruning relation $\isPruneBase ~ \subset \mathbb{I} \times \mathcal{P}(L) \times \mathbb{I}$ is s.t. for any $\ell \in L$, any $x \in \{\varnothing\}\cup\mathbb{A}$, any $i_1$ and $i_2$ from $\mathbb{I}$ and any $f \in \{strict,alt\} \cup \bigcup_{\ell' \subseteq L} \{cr_{\ell'} \}$:

{
\centering

\begin{minipage}{3cm}
\begin{prooftree}
\AxiomC{$x \evadesLfs{\ell}$\vphantom{$\isPruneOf{\ell}$}}
\UnaryInfC{$x \isPruneOf{\ell} x$}
\end{prooftree}
\end{minipage}
\begin{minipage}{6cm}
\begin{prooftree}
\AxiomC{$i_j \isPruneOf{\ell} i_j'$\vphantom{$\isPruneOf{\ell}$}}
\AxiomC{$\neg (i_k \evadesLfs{\ell})$}
\RightLabel{
$\{j,k\} = \{1,2\}$
}
\BinaryInfC{$alt(i_1,i_2) \isPruneOf{\ell} i_j'$}
\end{prooftree}
\end{minipage}

\begin{minipage}{4cm}
\begin{prooftree}
\AxiomC{$i_1 \isPruneOf{\ell} i_1'$}
\AxiomC{$i_2 \isPruneOf{\ell} i_2'$}
\BinaryInfC{$f(i_1,i_2) \isPruneOf{\ell} f(i_1',i_2')$}
\end{prooftree}
\end{minipage}
\begin{minipage}{4cm}
\begin{prooftree}
\AxiomC{$i_1 \isPruneOf{\ell} i_1'$}
\UnaryInfC{$loop_S(i_1) \isPruneOf{\ell} loop_S(i_1')$}
\end{prooftree}
\end{minipage}
\begin{minipage}{3.5cm}
\begin{prooftree}
\AxiomC{$\neg(i_1 \evadesLfs{\ell})$\vphantom{$\isPruneOf{\ell}$}}
\UnaryInfC{$loop_S(i_1) \isPruneOf{\ell} \varnothing$}
\end{prooftree}
\end{minipage}

}
\end{definition}

The semantics from \cite{equivalence_of_denotational_and_operational_semantics_for_interaction_languages} relies on the definition of an execution relation which relates interactions to \textbf{(1)} the actions which are immediately expressible and \textbf{(2)} the interactions which remain to be expressed afterwards.
With Def.\ref{def:non_expression}, we can determine which actions cannot be immediately expressed via the $\centernot{\rightarrow}$ relation.

\begin{definition}[Non-expression]
\label{def:non_expression}
The predicate $\centernot{\rightarrow} ~\subseteq \mathbb{I} \times \mathbb{A}$ is s.t. for any $a \in \mathbb{A}$, any $x \in \{\varnothing\}\cup\mathbb{A}$, any $i_1$ and $i_2$ from $\mathbb{I}$ and any $\ell \subseteq L$:

{
\centering

\begin{minipage}{2cm}
\begin{prooftree}
\AxiomC{$x \neq a$\vphantom{$\centernot{\xrightarrow{a}}$}}
\UnaryInfC{$x \centernot{\xrightarrow{a}}$}
\end{prooftree}
\end{minipage}
\begin{minipage}{3cm}
\begin{prooftree}
\AxiomC{$i_1 \centernot{\xrightarrow{a}}$}
\UnaryInfC{$loop_S(i_1) \centernot{\xrightarrow{a}}$}
\end{prooftree}
\end{minipage}
\begin{minipage}{5cm}
\begin{prooftree}
\AxiomC{$i_1 \centernot{\xrightarrow{a}}$}
\AxiomC{$(\neg(i_1 \evadesLfs{L}))\vee(i_2 \centernot{\xrightarrow{a}})$}
\BinaryInfC{$strict(i_1,i_2) \centernot{\xrightarrow{a}}$}
\end{prooftree}
\end{minipage}

\begin{minipage}{6cm}
\begin{prooftree}
\AxiomC{$i_1 \centernot{\xrightarrow{a}}$}
\AxiomC{$(\neg(i_1 \evadesLfs{\{\theta(a)\} \setminus \ell}))\vee(i_2 \centernot{\xrightarrow{a}})$}
\BinaryInfC{$cr_\ell(i_1,i_2) \centernot{\xrightarrow{a}}$}
\end{prooftree}
\end{minipage}
\begin{minipage}{4cm}
\begin{prooftree}
\AxiomC{$i_1 \centernot{\xrightarrow{a}}$}
\AxiomC{$i_2 \centernot{\xrightarrow{a}}$}
\BinaryInfC{$alt(i_1,i_2) \centernot{\xrightarrow{a}}$}
\end{prooftree}
\end{minipage}

}
\end{definition}

The relation $\rightarrow$ given in Def.\ref{def:expression} uses the predicates from Def.\ref{def:evasion}, Def.\ref{def:pruning_relation} and Def.\ref{def:non_expression} to characterize transformations of the form $i \xrightarrow{a} i'$ where $i \in \mathbb{I}$ is an interaction, $a \in \mathbb{A}$ is an action which is immediately executable from $i$ and $i' \in \mathbb{I}$ is a follow-up interaction which characterizes continuations of behaviors specified by $i$ that start with $a$.

\begin{definition}[Expression]
\label{def:expression}
The predicate $\rightarrow ~\subseteq \mathbb{I} \times \mathbb{A} \times \mathbb{I}$ is s.t. for any $a \in \mathbb{A}$, any $i_1,i_2,i_1'$ and $i_2'$ from $\mathbb{I}$, any $f \in \{strict\} \cup \bigcup_{\ell' \subseteq L} \{cr_{\ell'} \}$ and any $\{j,k\} = \{1,2\}$:

{
\centering

\begin{minipage}{2cm}
\begin{prooftree}
\AxiomC{\phantom{$\xrightarrow{a}$}}
\RightLabel{act}
\UnaryInfC{$a \xrightarrow{a} \varnothing$}
\end{prooftree}
\end{minipage}
\begin{minipage}{7cm}
\begin{prooftree}
\AxiomC{$i_1 \xrightarrow{a} i_1'$}
\RightLabel{loop}
\UnaryInfC{$loop_S(i_1) \xrightarrow{a} strict(i_1',loop_S(i_1))$}
\end{prooftree}
\end{minipage}
\begin{minipage}{4.75cm}
\begin{prooftree}
\AxiomC{$i_1 \xrightarrow{a} i'_1$}
\RightLabel{f-left}
\UnaryInfC{$f(i_1,i_2) \xrightarrow{a} f(i'_1,i_2)$}
\end{prooftree}
\end{minipage}

\vspace*{.1cm}

\begin{minipage}{6.5cm}
\begin{prooftree}
\AxiomC{$i_2 \xrightarrow{a} i'_2$}
\AxiomC{$i_1 \evadesLfs{L}$}
\RightLabel{strict-right}
\BinaryInfC{$strict(i_1,i_2) \xrightarrow{a} i'_2$}
\end{prooftree}
\end{minipage}
\begin{minipage}{6.5cm}
\begin{prooftree}
\AxiomC{$i_2 \xrightarrow{a} i_2'$}
\AxiomC{$i_1 \isPruneOf{\{\theta(a)\} \setminus \ell} i_1'$}
\RightLabel{cr-right}
\BinaryInfC{$cr_\ell(i_1,i_2) \xrightarrow{a} cr_\ell(i_1',i_2')$}
\end{prooftree}
\end{minipage}

\vspace*{.1cm}

\begin{minipage}{6.5cm}
\begin{prooftree}
\AxiomC{$i_j \xrightarrow{a} i'_j$}
\AxiomC{$i_k \centernot{\xrightarrow{a}}$}
\RightLabel{alt-choice}
\BinaryInfC{$alt(i_1,i_2) \xrightarrow{a} i'_j$}
\end{prooftree}
\end{minipage}
\begin{minipage}{6.5cm}
\begin{prooftree}
\AxiomC{$i_1 \xrightarrow{a} i'_1$}
\AxiomC{$i_2 \xrightarrow{a} i'_2$}
\RightLabel{alt-delay}
\BinaryInfC{$alt(i_1,i_2) \xrightarrow{a} alt(i'_1,i'_2)$}
\end{prooftree}
\end{minipage}

}

\end{definition}

The relation from Def.\ref{def:expression} resembles those found in process algebra.
Rules ``\textit{act}'', ``\textit{loop}'' and ``\textit{f-left}'' are self-explanatory.
Given $i_1$ and $i_2$ two interactions, we can execute an action $a_2$ on the right of either $strict(i_1,i_2)$ or $cr_\ell(i_1,i_2)$ (with $\ell \subseteq L$) i.e. resulting from the execution of $i_2$ iff it does not contradict the partial orders imposed by either $strict$ or $cr_\ell$ between the actions of $i_1$ and those of $i_2$.

Rule ``\textit{strict-right}'' imposes the termination of $i_1$ before any action from $i_2$ can be executed.
This termination is possible iff $i_1$ can express the empty behavior $\varepsilon$ which is characterized by $i_1 \evadesLfs{L}$ because $\varepsilon$ is the only behavior which involves no action occurring on any one of the lifelines of $L$.
If $a_2$ is executed, we then consider that $i_1$ has terminated (otherwise we might subsequently observe actions $a_1$ from $i_1$ which contradicts the order imposed by $strict$) and there only remains to execute $i_2'$ which is s.t. $i_2 \xrightarrow{a} i_2'$.

If $\ell = L$, we are in the presence of interleaving operator $par = cr_L$. 
In that case, it is always possible to execute $a_2$ from $i_2$ because any action $a_1$ from $i_1$ can occur either before or after $a_2$.
This is reflected by the fact that $\{\theta(a)\} \setminus L = \emptyset$ and that $i_1 \isPruneOf{\emptyset} i_1$ always holds.
Hence we have $i'_1 = i_1$ and the predicate $i_1 \isPruneOf{\emptyset} i_1 = \top$ which makes rule ``\textit{cr-right}'' coincide with the classical right-rule for interleaving \cite{equivalence_of_denotational_and_operational_semantics_for_interaction_languages}.
    
If $\ell = \emptyset$, we are in the presence of weak sequencing operator $seq = cr_\emptyset$. 
In that case let us consider an action $a_1$ from $i_1$ occurring on the lifeline $\theta(a_2)$ on which $a_2$ occurs.
If $a_1$ must occur whenever $i_1$ is executed, then it must logically occur before $a_2$ (as imposed by $seq$).
Hence, if $a_2$ occurs first, this means $a_1$ must not occur at all.
It is possible for $i_1$ not to express $a_1$ (and any other action occurring on $\theta(a_2)$) iff $i_1 \evadesLfs{\{\theta(a)\}}$.
In that case, in order to compute a follow-up to the execution of $a_2$ from $i_2$ in $seq(i_1,i_2)$, we need to clean-up $i_1$ from any action occurring on $\theta(a_2)$. This is the role of the pruning predicate $\isPruneBase$ which intervenes in ``\textit{cr-right}'' via the condition of the existence of a $i_1'$ s.t. $i_1 \isPruneOf{\{\theta(a)\}} i_1'$.
In the pruned interaction $i_1'$, we only preserve behaviors of $i_1$ that do not contradict the fact that $a_2$ occurs before any action from $i_1$.
In that sense, and because $\ell = \emptyset$, the ``\textit{cr-right}'' rule coincides with the classical right-rule for weak sequencing \cite{equivalence_of_denotational_and_operational_semantics_for_interaction_languages}.
   
For all the other cases i.e. whenever $\ell \subsetneq L$ and $\ell \neq \emptyset$, the condition $i_1 \isPruneOf{\{\theta(a)\} \setminus \ell} i_1'$ of rule ``\textit{cr-right}'' makes so that $cr_\ell$ behaves like interleaving for actions occurring on lifelines of $\ell$ and like weak sequencing for those occurring on $L \setminus \ell$.
Fig.\ref{fig:sem_coreg} illustrates the use of a co-region and of pruning on an example. 
Here $cr_{\{\hlf{l_1}\}}$ makes so that the emission of $\hms{m_1}$ and $\hms{m_2}$ can occur in any order but that $\hms{m_1}$ must be received before $\hms{m_2}$ if it is ever received.
When we execute $\hlf{l_1}!\hms{m_2}$, because it can occur after $\hlf{l_1}!\hms{m_1}$ due to the concurrent region on $\hlf{l_1}$, the pruning predicate $\isPruneOf{\hlf{l_1}}$ does not force the choice of a branch of the alternative.
However, when we execute $\hlf{l_2}?\hms{m_2}$, pruning forces the right branch of the alternative (containing the empty interaction $\varnothing$) to be chosen. Otherwise, we would risk having $\hlf{l_2}?\hms{m_1}$ occur afterwards, which is forbidden by the weak sequencing.

\begin{remark}
\label{rem-exec-par}
The $par = cr_L$ operator is the most permissive scheduling operator (among $strict$ and all the $cr_\ell$ with $\ell \subseteq L$ which includes $seq = cr_\emptyset$). 
Indeed, all left rules have the same form and thus allow the same derivations while the right rules contain restrictive conditions for $strict$ and all $cr_\ell$ (resp. $i_1 \evadesLfs{L}$ for $strict$ and $i_1 \isPruneOf{\{\theta(a)\} \setminus \ell} i_1'$ for $cr_\ell$ with $\ell \subsetneq L$) but not for $par = cr_L$ because $\{\theta(a)\} \setminus L = \emptyset$ and we always have that $i_1 \isPruneOf{\emptyset} i_1$ holds.
As a result, any $f \in \{strict\} \cup \bigcup_{\ell \subseteq L} \{cr_\ell\}$ allows less derivations than $par = cr_L$ by construction.
\end{remark}

\begin{figure}[h]
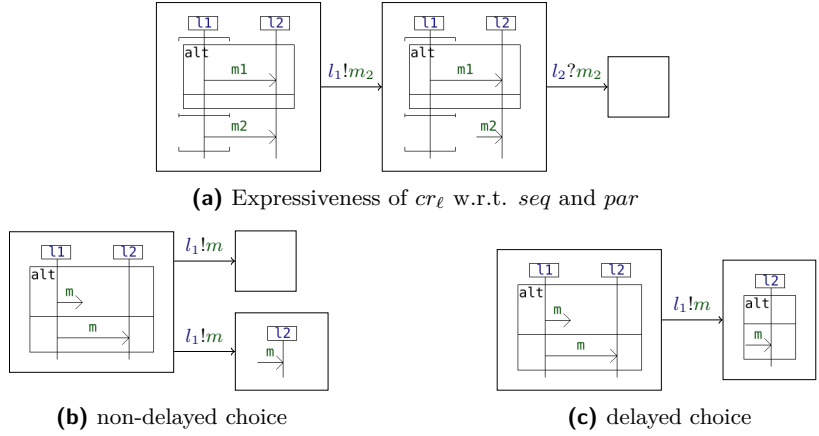

\centering

\begin{subfigure}{\linewidth}
\captionsetup{justification=centering}
    \centering
    \scalebox{.8}{\input{images/diff/coreg_seq/graph}}
    \caption{Expressiveness of $cr_\ell$ w.r.t. $seq$ and $par$}
    \label{fig:sem_coreg}
\end{subfigure}

\vspace*{.2cm}

\begin{subfigure}{.45\linewidth}
\captionsetup{justification=centering}
    \centering
    \scalebox{.8}{\input{images/alt/normal/graph}}
    \caption{non-delayed choice}
    \label{fig:sem_alt_non_delayed}
\end{subfigure}
\begin{subfigure}{.45\linewidth}
\captionsetup{justification=centering}
    \centering
    \scalebox{.8}{\input{images/alt/delayed/graph}}
    \caption{delayed choice}
    \label{fig:sem_alt_delayed}
\end{subfigure}

    \caption{Examples of interaction execution}
    \label{fig:exec_sem}
\end{figure}

In our formalism strict sequencing $strict$ is used to enforce a strict order between any two actions. 
We use it to encode the asynchronous passing or broadcast of a message in e.g. $strict(\hlf{l_1}!\hms{m},\hlf{l_2}?\hms{m})$ and also as the scheduling operator associated to the strictly sequential loop $loop_S$.
The co-region operators encode both parallel composition and weak sequencing and provides additional expressiveness w.r.t. \cite{equivalence_of_denotational_and_operational_semantics_for_interaction_languages}.
It can be used e.g. for specifying that some messages are emitted in a specific order but may be received in any order.

There are several trace-equivalent (as opposed to bissimilar) manners to define an operational rule for the non-deterministic choice $alt$ operator. Fig.\ref{fig:sem_alt_non_delayed} and Fig.\ref{fig:sem_alt_delayed} present two such manners. In \cite{revisiting_semantics_of_interactions_for_trace_validity_analysis,a_small_step_approach_to_multi_trace_checking_against_interactions}, we have presented the manner described on Fig.\ref{fig:sem_alt_non_delayed} in which choice of alternative branches is made as soon as possible. In this paper, we rather favor the one described on Fig.\ref{fig:sem_alt_delayed} that is called delayed-choice in \cite{high_level_message_sequence_charts} as its use will further reduce the number of states of generated NFA.

\begin{definition}[Semantics]
\label{def:semantics}
$\sigma : \mathbb{I} \rightarrow \mathcal{P}(\mathbb{T})$ is defined by:

{
\centering

\begin{minipage}{4cm}
\begin{prooftree}
\AxiomC{$i \evadesLfs{L}$}
\UnaryInfC{$\varepsilon \in \sigma(i)$}
\end{prooftree}
\end{minipage}
\begin{minipage}{4cm}
\begin{prooftree}
\AxiomC{$t \in \sigma(i')$}
\AxiomC{$i \xrightarrow{a} i'$}
\BinaryInfC{$a.t \in \sigma(i)$}
\end{prooftree}
\end{minipage}

}

\end{definition}

The predicates from Def.\ref{def:evasion} and Def.\ref{def:expression} enable us to define an operational semantics of interactions in Def.\ref{def:semantics}. The set $\sigma(i)$ of traces of an interaction $i$ contains all (possibly empty) sequences $a_1 a_2 \ldots a_k$ of actions such that there exist some execution steps $i \xrightarrow{a_1} i_1$, 
$i_1 \xrightarrow{a_2} i_2$, \ldots $i_{k-1} \xrightarrow{a_k} i_k$ with 
$i_k \evadesLfs{L}$. In particular, if the sequence is empty, then $i \evadesLfs{L}$ and $\varepsilon \in \sigma(i)$.
We denote by $\overset{*}{\rightarrow}$ the reflexive and transitive closure of the execution relation~$\rightarrow$.

The above predicate definitions are consistent in the sense that they are interdependent and complementary. For instance, for any $\ell \subseteq L$, any $i$ in $\mathbb{I}$, $i \evadesLfs{\ell}$ iff there exists a unique $i'$ in $\mathbb{I}$ s.t. $i \isPruneOf{\ell} i'$. Other such properties can be stated about the operational semantics, but are not essential for the rest of the paper. In the sequel we essentially use the rules of Def.\ref{def:expression}.

\section{Deriving NFA from interactions\label{sec:gen_nfa}}

\subsection{Our approach}

\begin{wrapfigure}{r}{0pt}
    \centering
    \raisebox{0pt}[\dimexpr\height-5\baselineskip\relax]{%
\scalebox{.8}{
\begin{tikzpicture}
\node (int) at (-.5,0) {\includegraphics[width=1.5cm]{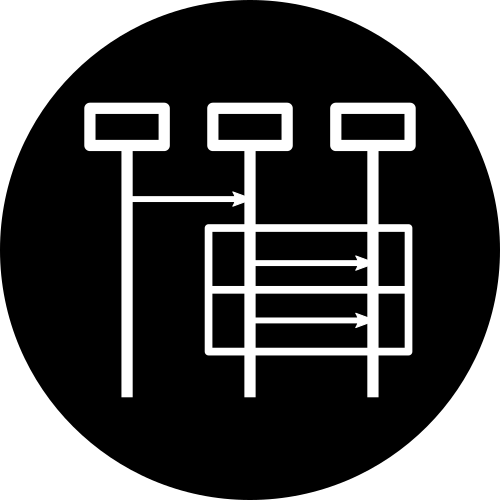}};
\node (nfa) at (3,2.25) {\includegraphics[width=1.5cm]{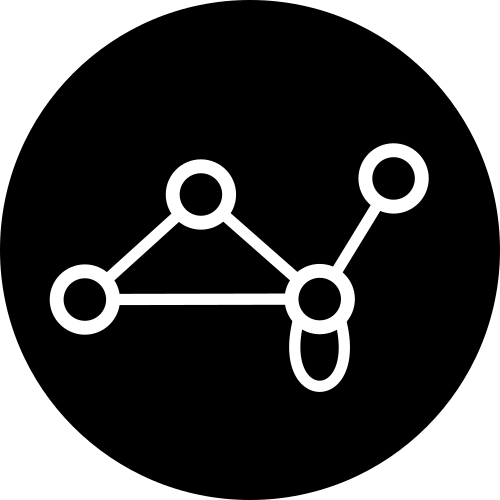}};
\node (nfared) at (3,-2.25) {\includegraphics[width=1.5cm]{images/icons/icon_nfa.png}};
\node (lang) at (2,0) {
\begin{tikzpicture}
\node[inner sep=0,outer sep=0] (t0) at (0,0) {\includegraphics[width=.75cm]{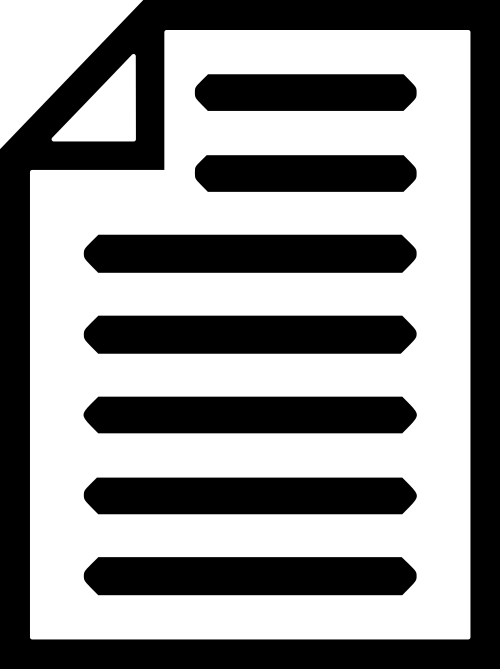}};
\node[inner sep=0,outer sep=0,below left=-.6cm and -.5cm of t0] (t1) {\includegraphics[width=.75cm]{images/icons/icon_text_file_filledwhite.png}};
\node[inner sep=0,outer sep=0,below right=-.6cm and -.5cm of t1] (t2) {\includegraphics[width=.75cm]{images/icons/icon_text_file_filledwhite.png}};
\node[inner sep=0,outer sep=0,above right=-.8cm and -.25cm of t2] (t3) {\includegraphics[width=.75cm]{images/icons/icon_text_file_filledwhite.png}};
\end{tikzpicture}
};
\node[right = -.1cm of lang,align=center] (tl) {trace\\language};
\node[left = -.1cm of int] (il) {\rotatebox{90}{interaction}};
\node[right = 0cm of nfa,align=center] (nl) {NFA};
\node[right = 0cm of nfared,align=center] (rl) {reduced\\NFA};
\draw[->,line width=3pt,red] (int) edge[bend left=30,sloped] node[above,midway,black] {translation} (nfa);
\draw[->,line width=3pt,darkspringgreen] (int) edge[bend right=30,sloped] node[below,midway,black] {translation} (nfared);
\draw[->,line width=3pt,red] (nfa) edge[bend left=60,sloped] node[above,midway,black] {state reduction} (nfared);
\draw[->] (int) -- node[above,midway] {$\mathcal{L}$} (lang);
\draw[->] (nfa) -- node[right,midway] {$\mathcal{L}$} (lang);
\draw[->] (nfared) -- node[below left=-.1cm and .01cm,midway] {$\mathcal{L}$} (lang);
\node (leg1) at (1.1,-3.2) {\shortColRed{{\scriptsize\faSquare}\small~careless translation then reduction}};
\node[below=-.1cm of leg1] (leg2) {\shortColGreen{{\scriptsize\faSquare}\small~translation with care for reduction}};
\end{tikzpicture}
}
    }
    \caption{Translations}
    \label{fig:int_to_nfa_schema}
\end{wrapfigure}

A FA generated from an interaction describing a real-world system may be extremely large (in number of states) which poses problems related to the time for building the FA or the space for storing it. 

State reduction and state minimization correspond to obtaining an equivalent FA with resp. fewer states and as few states as possible.
It is known that a minimal DFA may have an exponentially larger number of states than that of an equivalent minimal NFA \cite{succint_representation_of_regular_languages}. For instance, given alphabet $\{a,b\}$, the regular expression
$(a|b)^*a(a|b)^n$ can be encoded as a NFA with $n+2$ states while an equivalent minimal DFA has $2^{n+1}$ states.
Because interactions can encode such expressions, choosing NFA as a target formalism is preferable if we want to optimise towards state reduction.
Also, operators such as $par = cr_L$, $seq=c_\emptyset$, $alt$ and $loop_S$ make behaviors of interactions non-deterministic and cyclic.

However, the problem of state minimization for NFA is PSPACE-complete \cite{minimal_nfa_problems_are_hard}.
Various algorithms exist, such as Kameda-Weiner \cite{on_the_state_minimization_of_nondeterministic_finite_automata} or \cite{a_new_algorithm_of_the_state_minimization_for_the_nondeterministic_finite_uautomata,minimilizations_of_nfa_using_the_universal_automaton}. 
Still, scalability is problematic, as demonstrated by the search for approximal solutions (i.e. to find reduced but not minimal NFA in a more reasonable time)~\cite{nfa_reduction_algorithms_by_means_of_regular_inequalities,on_nfa_reductions,algorithms_for_computing_small_nfas,reducing_nondeterministic_finite_automata_with_SAT_solvers}.
\cite{algorithms_for_computing_small_nfas} relies on identifying and merging equivalent states while \cite{reducing_nondeterministic_finite_automata_with_SAT_solvers} encodes the problem in SAT.
Yet, experimental validation is limited to small NFA (in \cite{reducing_nondeterministic_finite_automata_with_SAT_solvers} they consider at most $15$ states).
Vastly more efficient algorithms exist for strict subsets of NFA such as DFA (e.g. Hopcroft \cite{hopcroft_dfa_min}) or acyclic automata \cite{incremental_construction_of_minimal_acyclic_finite_state_automata,fa_minimization_heuristics_for_a_class_of_finite_languages} but they cannot be used to minimize FA obtained from interactions (for which the minimal FA is a minimal NFA which can be cyclic and non deterministic).

Fig.\ref{fig:int_to_nfa_schema} describes two manners to generate reduced NFA from interactions.
The \shortColRed{red path} uses of a translation mechanism that is not optimized for state reduction followed by that a NFA state reduction technique.
By contrast, the \shortColGreen{green path} uses a translation mechanism which incorporates on-the-fly state reduction so as to directly obtain a reduced NFA.
Due to the aforementioned difficulty of NFA reduction, we argue that such an approach is preferable.



In the following, we detail one such mechanism.
We prove that the set of interactions reachable by iterating the application of elementary steps $i \xrightarrow{a} i'$ is finite which enables the construction of a NFA in which each state corresponds to an interaction.
We explain how we can reduce the number of states on-the-fly by simplifying interactions as they are discovered (using term rewriting).
Memoization of already encountered simplified terms allows building the set of states of the NFA while the identification of transformations $i \xrightarrow{a} i'$ allows building its set of transitions and a state corresponding to a term $i$ is accepting iff $i \evadesLfs{L}$.

\subsection{Details of the NFA generation mechanism}

Given an interaction $i$, the set of interactions reachable from $i$ with the relation $\rightarrow$ is finite. This allows us to define a finite set $\reachableInts(i)$ of such terms.

\begin{lemma}
\label{lem:finite_reachable}
For any $i \in \mathbb{I}$, the set $\reachableInts(i) = \{ i' ~|~ i \overset{*}{\rightarrow} i' \}$ is finite.
\end{lemma}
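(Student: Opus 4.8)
The plan is to bound, for each interaction $i$, the "size" of every reachable interaction $i'$ in a suitable sense, and then argue that only finitely many interactions satisfy that bound. The natural measure is not the raw term height (the \textit{loop} rule introduces new $strict$ nodes and $loop_S$ copies, so the term can grow) but rather the multiset of \emph{sub-interactions} that already occur syntactically inside $i$. The key structural observation is that every rule of Def.\ref{def:expression} rewrites an interaction into a term built only from (i) the constant $\varnothing$, (ii) $strict$, $alt$ and $cr_\ell$ nodes already present as operator occurrences in the source, and (iii) sub-terms $i_1'$ obtained recursively from strict sub-terms of the source, together with verbatim copies of sub-terms of the source (e.g. the $loop_S(i_1)$ that is duplicated in the \textit{loop} rule, or the untouched $i_2$ in \textit{f-left}). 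Crucially, pruning (Def.\ref{def:pruning_relation}) only ever deletes sub-terms or replaces them by $\varnothing$, never creating anything new, so a pruned $i_1'$ is always "no larger" than $i_1$ and made of the same building blocks.

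Concretely, I would first define $\mathtt{sub}(i)$ as the (finite) set of all sub-terms of $i$, and then define a finite set $D$ of "derived" interactions as the smallest set containing $\varnothing$, containing $\mathtt{sub}(i)$, closed under pruning (i.e. if $j \in D$ and $j \isPruneOf{\ell} j'$ then $j' \in D$), and closed under the constructors $strict$, $alt$, $cr_\ell$ applied to pairs of elements of $D$ \emph{only in the specific patterns produced by the rules} — namely $strict(j_1', loop_S(j_1))$ for $loop_S(j_1) \in \mathtt{sub}(i)$, $f(j_1', j_2)$ and $alt(j_1', j_2')$ for $f(j_1,j_2)$ or $alt(j_1,j_2)$ occurring in $i$. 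The first task is to check $D$ is finite: pruning closure is finite because pruning is terminating and "size non-increasing", and the constructor closure adds only a bounded number of layers on top of the finitely many operator occurrences of $i$ — here one must be a little careful, since $strict(i_1', loop_S(i_1))$ itself can be on the left of another $strict$ introduced by an \emph{outer} loop, so the nesting of freshly-created $strict$ nodes could a priori be unbounded. The right way to handle this is to observe that the number of \emph{distinct} $loop_S$ sub-terms of $i$ is fixed, each contributes at most one "pending copy" at a time in a reachable term (the loop body and the loop itself), and the freshly created $strict$ spines have bounded depth because their right children are always original $loop_S(\cdot)$ terms from $i$, of which there are finitely many — so I would phrase $D$ via an explicit finite grammar and prove $\mathtt{reach}(i) \subseteq D$ by induction on the length of the derivation $i \overset{*}{\rightarrow} i'$, doing a case analysis on the last rule applied in Def.\ref{def:expression} and invoking the pruning-closure of $D$ for the \textit{cr-right} case.

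The induction step itself is the routine part: given $i' \in D$ and $i' \xrightarrow{a} i''$, one inspects which rule justified the step. For \textit{act}, $i'' = \varnothing \in D$. For \textit{f-left}, $i'' = f(i_1', i_2)$ where $i_1' \in D$ by the inner induction on the premise $i_1 \xrightarrow{a} i_1'$ and $i_2$ is a sub-term of $i'$ hence in $D$; the pattern matches the constructor-closure of $D$. For \textit{strict-right} and \textit{alt-choice}, $i''$ is a recursively-derived child, directly in $D$. For \textit{loop}, $i'' = strict(i_1', loop_S(i_1))$ with $i_1' \in D$ and $loop_S(i_1) \in \mathtt{sub}(i') \subseteq D$, matching the loop-pattern in the grammar. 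For \textit{cr-right}, $i'' = cr_\ell(i_1', i_2')$ where $i_1'$ comes from pruning $i_1$ — here I use that $D$ is pruning-closed — and $i_2'$ recursively. For \textit{alt-delay}, both children are recursively derived. The one subtlety threaded through all cases is making the "inner" induction hypothesis (that $D$ contains the results of sub-derivations applied to sub-terms of $i'$) strong enough; I would formulate $D$ so that it is closed under the whole machinery, not just top-level steps, essentially proving simultaneously that $D$ is closed under $\rightarrow$.

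The main obstacle, and the place where genuine care is needed rather than bookkeeping, is the unbounded-nesting worry for $strict$ spines generated by nested loops: one must convince oneself (and the reader) that iterating \textit{loop} inside \textit{loop} does not produce ever-deeper $strict(\cdots strict(\cdots, loop_S(\cdot)), loop_S(\cdot))$ towers. The resolution is that after a \textit{loop} step the freshly created node is $strict(i_1', loop_S(i_1))$ and any \emph{further} execution descends into $i_1'$ via \textit{f-left} or discards the whole left part via \textit{strict-right} once $i_1'$ has been consumed — the left spine is consumed, not indefinitely extended, and each $loop_S(i_1)$ re-spawns at most its own single copy. Encoding this as a finite grammar (one nonterminal per sub-term of $i$, with productions mirroring exactly the six rules) and proving $\mathtt{reach}(i) \subseteq L(\text{grammar})$ with $L(\text{grammar})$ finite is the cleanest route; since the grammar has finitely many nonterminals and is "non-recursive modulo the fixed loop sub-terms", its language is finite, and finiteness of $\mathtt{reach}(i)$ follows.
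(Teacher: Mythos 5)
Your proof is correct in substance and rests on the same two key observations as the paper's, but it is organized quite differently. The paper proceeds by structural induction on $i$ and characterizes $\reachableInts(f(i_1,i_2))$ compositionally from $\reachableInts(i_1)$ and $\reachableInts(i_2)$: for $par=cr_L$ the reachable set is exactly $\{par(j_1,j_2) \mid j_1 \in \reachableInts(i_1),\, j_2 \in \reachableInts(i_2)\}$, the other scheduling operators are disposed of by comparison with $par$ (Rem.\ref{rem-exec-par}), $alt$ contributes at most $\{i\}\cup\reachableInts(i_1)\cup\reachableInts(i_2)\cup\{alt(j_1,j_2)\mid j_1 \in \reachableInts(i_1),\, j_2 \in \reachableInts(i_2)\}$, and the crux --- your ``$strict$-spine'' worry --- is resolved exactly as you resolve it: $\reachableInts(loop_S(i_1)) = \{i\}\cup\{strict(j_1,i)\mid j_1\in\reachableInts(i_1)\}$, because a further unfolding of the outer loop goes through ``\textit{strict-right}'', which discards the consumed left component rather than stacking a new $strict$ on top. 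You instead build one global finite grammar $D$ and prove $\reachableInts(i)\subseteq D$ by induction on derivation length. The two routes buy slightly different things. The structural induction is shorter and more modular, since finiteness of the sub-reachable-sets is simply the induction hypothesis, whereas your closure-set formulation is heavier to set up (finiteness of $D$ itself re-imports essentially the same case analysis). On the other hand, your insistence that $D$ be closed under pruning makes explicit a point the paper's proof passes over too quickly: in the ``\textit{cr-right}'' case the left component is replaced by a pruned term $i_1'$ with $i_1 \isPruneOf{\{\theta(a)\}\setminus\ell} i_1'$, and pruned terms are in general \emph{not} reachable from $i_1$ via $\rightarrow$ (pruning selects $alt$ branches and erases loops), so the paper's claim that every derivation $f(i_1,i_2) \xrightarrow{a} f(i_1',i_2')$ is matched by a derivation $par(i_1,i_2) \xrightarrow{a} par(i_1',i_2')$ does not literally hold as stated; your pruning-closure requirement is precisely the repair needed to make that case airtight. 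Either route, carried out fully, yields the lemma.
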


\begin{proof}
For any interaction $i \in \mathbb{I}$ let us reason by induction on interactions in $\mathbb{I}$.

For $i = \varnothing$, $\reachableInts(i)$ is reduced to $\{\varnothing\}$ since no rule is applicable.
For $i = a \in \mathbb{A}$, $\reachableInts(i) = \{a,\varnothing\}$ since only rule ``\textit{act}'' is applicable and yields $a \xrightarrow{a} \varnothing$.

Considering $par = cr_L$, for $i=par(i_1,i_2)$, we have $\reachableInts(i) = \{ par(j_1,j_2) ~|~ j_1 \in \reachableInts(i_1), j_2 \in \reachableInts(i_2) \}$. 
Indeed, from terms of this form, we can apply either rule ``\textit{f-left}'' or ``\textit{cr-right}''.
The former replaces the sub-term $i_1$ by a certain $i_1'$ leading to a term of the form $par(i'_1,i_2)$.
Because whichever is the action $a$ that is executed we have $\{\theta(a)\}\setminus L= \emptyset$, we always have $i_1 \isPruneOf{\emptyset} i_1$ as the right-hand-side condition of the application of ``\textit{cr-right}''.
As a result, applying it yields a term of the form $par(i_1,i_2')$, with $i_1$ being preserved and $i_2'$ reached from $i_2$.
The same two rules are again the only ones applicable from these terms so that by applying them several times, successive derivations result in all terms of the form $par(j_1,j_2)$ with $j_1$ (resp $j_2$) reachable from $i_1$ (resp. $i_2$), i.e. $j_1 \in \reachableInts(i_1)$ (resp. $j_2 \in \reachableInts(i_2)$). Both $\reachableInts(i_1)$ and $\reachableInts(i_2)$ are finite by induction and hence $\reachableInts(i)$ is also finite.

For $i=f(i_1,i_2)$ with $f \in \{strict\} \cup \bigcup{\ell \subsetneq L} cr_\ell$, we have $|\reachableInts(i)| \leq |\reachableInts(par(i_1,i_2))|$. 
Indeed, as per Rem.\ref{rem-exec-par}, these operators enable less interleavings than $par$. Derivations from a term of the form $f(i_1,i_2)$ lead to terms of the form $f(i'_1,i'_2)$ with the property that the derivation $par(i_1,i_2) \xrightarrow{a} par(i'_1,i'_2) $ also holds. As by hypothesis (previous point), $\reachableInts(par(i_1,i_2))$ is finite, $\reachableInts(f(i_1,i_2))$ is also finite.  

For $i=alt(i_1,i_2)$, then $\reachableInts(i) \subset \{i\} \cup \reachableInts(i_1) \cup \reachableInts(i_2) \cup \{ alt(j_1,j_2) ~|~ j_1 \in \reachableInts(i_1), j_2 \in \reachableInts(i_2) \}$. 
Indeed, a derivation $alt(i_1,i_2) \xrightarrow{a} i'$ comes from either the rule ``\textit{alt-choice}'' or the rule ``\textit{alt-delay}'':
\begin{itemize}
    \item if it comes from the rule ``\textit{alt-choice}'', then it means that there is a $k \in \{1,2\}$ s.t. $i_k \xrightarrow{a} i'$ and $i' \in \reachableInts(i_k)$ as well as all the following interactions.
    \item if it comes from the rule ``\textit{alt-delay}'', then $i'$ is of the form $alt(i_1',i_2')$ and, by induction, all continuations are in either $\reachableInts(i_1')$, $\reachableInts(i_2')$ or $\{ alt(j_1,j_2) ~|~ j_1 \in \reachableInts(i_1'), j_2 \in \reachableInts(i_2') \}$ 
\end{itemize}
The induction hypothesis implies that both $\reachableInts(i_1)$ and $\reachableInts(i_2)$ are finite. Hence $\reachableInts(i)$ is also finite.

For $i=loop_S(i_1)$, then $\reachableInts(i) = \{i\} \cup \{ strict(j_1,i) ~|~ j_1 \in \reachableInts(i_1) \}$. For such terms, only rule ``\textit{loop}'' is applicable so that the first derivation (if it exists) results in a term of the form $strict(i'_1,loop_S(i_1))$. From there:
\begin{itemize}
    \item if we apply the ``\textit{f-left}'' rule, 
    we obtain a term of the form $strict(i''_1,$ $loop_S(i_1))$ with $i'_1 \xrightarrow{a} i''_1$. With further applications of rule ``\textit{f-left}'', we obtain all terms of the form $strict(j_1,loop_S(i_1))$ with $j_1 \in \reachableInts(i_1)$.
    \item if we apply ``\textit{strict-right}'', it means that $i'_1\!\evadesLfs{L}$ and that we have $loop_S(i_1)\!\xrightarrow{a} strict(i^*_1,loop_S(i_1))$ with $i_1 \xrightarrow{a} i^*_1$ so that we obtain $strict(i'_1,loop_S(i_1))\!\xrightarrow{a} strict(i^*_1,loop_S(i_1))$. As $i^*_1$ belongs to $\reachableInts(i_1)$, by applying the ``\textit{strict-right}'' rule, we find yet again an interaction of the form $strict(j_1,i)$ with $j_1 \in \reachableInts(i_1)$. 
\end{itemize}
Due to the induction hypothesis, $\reachableInts(i_1)$ is finite and as a result, $\reachableInts(i) = \{i\} \cup \{ strict(j_1,i) ~|~ j_1 \in \reachableInts(i_1) \}$ is also finite.
\end{proof}

Lem.\ref{lem:finite_reachable} defines the set $\reachableInts(i)$ of all the interactions that can be reached from $i$ via the execution relation $\rightarrow$ and states that this set is always finite. In Def.\ref{def:translate_from_int_to_nfa}, this enables building a NFA whose recognized language coincides with the set of traces of the interaction $i$.

\begin{definition}
\label{def:translate_from_int_to_nfa}
For any $i_0 \in \mathbb{I}$, $\nfa(i_0) = (\mathbb{A}, Q, q_0, F, \rightsquigarrow)$ is the NFA whose elements are defined by: $Q = \reachableInts(i_0)$, $q_0 = i_0$, $F = \{ i ~|~ i \in Q, i \evadesLfs{L} \}$ and $\rightsquigarrow = \{(i,a,i') ~| (i,i') \in Q^2, a \in  \mathbb{A}, ~ i \xrightarrow{a} i' \}$.
\end{definition}

Def.\ref{def:translate_from_int_to_nfa} is well founded. Indeed, the set $Q$ is finite according to Lem.\ref{lem:finite_reachable}, the subset $F$ of accepting states and the transition relation $\rightsquigarrow$ are defined thanks to predicates $\evadesLfs{L}$ and $\rightarrow$ from Def.\ref{def:evasion} and Def.\ref{def:expression}.

Th.\ref{th:safety_translation} states the soundness of the translation i.e. that the language recognized by $\nfa(i)$ corresponds to the semantics $\sigma(i)$.

\begin{theorem}
\label{th:safety_translation}
For any interaction $i \in \mathbb{I}$, we have 
$
\mathcal{L}(\nfa(i)) = \sigma(i)
$
\end{theorem}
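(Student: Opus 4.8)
The plan is to prove the two inclusions $\mathcal{L}(\nfa(i)) \subseteq \sigma(i)$ and $\sigma(i) \subseteq \mathcal{L}(\nfa(i))$ separately, in each case by induction on the length of a word $w \in \mathbb{A}^*$. The key observation linking the two sides is that, by the very definition of $\nfa(i)$ in Def.\ref{def:translate_from_int_to_nfa}, a transition $i \overset{a}{\rightsquigarrow} i'$ in $\nfa(i)$ holds if and only if $i \xrightarrow{a} i'$ in the operational semantics (restricted to states in $\reachableInts(i_0)$, but since $i \overset{*}{\rightarrow} i'$ keeps us inside $\reachableInts(i_0)$, this restriction is vacuous along any path from $q_0$). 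Likewise, a state $i$ is accepting in $\nfa(i)$ exactly when $i \evadesLfs{L}$, which by the base case of Def.\ref{def:semantics} is exactly the condition $\varepsilon \in \sigma(i)$. So the structural bridge is: there is a path $q_0 \overset{w}{\rightsquigarrow} q_f$ with $q_f \in F$ in $\nfa(i)$ if and only if there is an execution sequence $i \overset{w}{\rightarrow} i_k$ with $i_k \evadesLfs{L}$, which is precisely the unfolding characterization of $\sigma(i)$ recalled after Def.\ref{def:semantics}.

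Concretely, I would first establish the auxiliary equivalence: for every $i' \in \mathbb{I}$ and every $w \in \mathbb{A}^*$, we have $i \overset{w}{\rightsquigarrow} i'$ in $\nfa(i)$ if and only if $i \overset{w}{\rightarrow} i'$ via iterated applications of $\rightarrow$. The forward direction is immediate by induction on $|w|$ using that each NFA transition is an $\rightarrow$-step. The backward direction requires Lem.\ref{lem:finite_reachable}: any interaction reached from $i$ by $\overset{*}{\rightarrow}$ lies in $\reachableInts(i) = Q$, so every $\rightarrow$-step along the execution sequence is a legitimate NFA transition (both endpoints are states). Then $w \in \mathcal{L}(\nfa(i))$ iff $\exists\, i_f \in F$ with $i \overset{w}{\rightsquigarrow} i_f$ iff $\exists\, i_f$ with $i \overset{w}{\rightarrow} i_f$ and $i_f \evadesLfs{L}$. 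Finally, I would show this last condition is equivalent to $w \in \sigma(i)$ by induction on $|w|$: for $w = \varepsilon$ it reduces to $i \evadesLfs{L} \Leftrightarrow \varepsilon \in \sigma(i)$, which is the first rule of Def.\ref{def:semantics}; for $w = a.t$, the second rule of Def.\ref{def:semantics} gives $a.t \in \sigma(i)$ iff $\exists\, i_1$ with $i \xrightarrow{a} i_1$ and $t \in \sigma(i_1)$, and by the induction hypothesis applied to $i_1$ (note $i_1 \in \reachableInts(i)$, and $\reachableInts(i_1) \subseteq \reachableInts(i)$) this matches the existence of an execution sequence from $i_1$ witnessing $t$, hence an execution sequence from $i$ witnessing $a.t$.

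I do not expect a genuinely hard obstacle here — the statement is essentially a bookkeeping exercise once Def.\ref{def:translate_from_int_to_nfa} is unfolded — but the step that needs the most care is making sure the state-space restriction in the NFA (transitions only between states of $Q = \reachableInts(i_0)$) does not lose any execution. That is exactly where Lem.\ref{lem:finite_reachable}, or rather its underlying fact that $\reachableInts$ is closed under $\rightarrow$, is doing the real work: it guarantees that the set of states is large enough to carry every operational execution as an NFA path, so the two-way correspondence is exact and neither inclusion fails. A secondary point to be stated cleanly is the monotonicity $i \xrightarrow{a} i_1 \Rightarrow \reachableInts(i_1) \subseteq \reachableInts(i)$, used to invoke the induction hypothesis on the follow-up interaction; this follows directly from transitivity of $\overset{*}{\rightarrow}$.
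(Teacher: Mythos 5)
Your proposal is correct and follows the same route as the paper's own (very terse) proof: both rest on the observation that transitions of $\nfa(i)$ are in exact bijection with execution steps $i' \xrightarrow{a} i''$ between reachable interactions and that accepting states are exactly those satisfying $\evadesLfs{L}$, from which the language equality follows by unfolding Def.\ref{def:semantics}. You merely make explicit the inductions and the closure of $\reachableInts(i)$ under $\rightarrow$ that the paper leaves implicit.
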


\begin{proof}
The construction of the automaton reflects the construction of the execution tree: there is a transition $i' \overset{a}{\rightsquigarrow} i''$ in $\nfa(i)$ for each execution step $i' \xrightarrow{a} i''$ issued from $i'$ and reciprocally.  
As accepting states of $\nfa(i)$ coincide with reachable interactions $i'$ verifying $i' \evadesLfs{L}$, then sequences recognized by $\nfa(i)$ are traces of $i$, i.e. we have $\mathcal{L}(\nfa(i)) = \sigma(i)$.
\end{proof}

In most cases, the iterative application of the execution relation $\rightarrow$ creates many useless occurrences of $\varnothing$ in the derived interaction terms. 
These sub-terms can be removed without the set of traces being modified. 
For example, because of the rule ``\textit{f-left}'' derivations from $strict(i,\varnothing)$ are exactly those from $i$. $strict(i,\varnothing)$ and $i$ are said to be semantically equivalent.
The construction of the NFA associated with an interaction can take advantage of considering such semantically equivalent terms in order to reduce the number of states given that each state corresponds to a term.
In order to set up these simplification mechanisms, we provide in Def.\ref{def:simp_varnothing} a set $R$ of rewrite rules aimed at eliminating useless occurrences of $\varnothing$.

\begin{definition}
\label{def:simp_varnothing}
Given a variable $x$, $R$ is the following set of rewrite rules over $\mathcal{T}_\mathcal{F}(\{x\})$:
\[
R = 
\left( 
\begin{array}{l}
\cup_{f \in \{strict\} \cup \bigcup_{\ell \subset L} \{cr_\ell\}} ~ \{ f(\varnothing,x) \leadsto x, ~ f(x,\varnothing) \leadsto x \} \\
\cup ~ \{ alt(\varnothing,loop_S(x)) \leadsto loop_S(x),~ alt(loop_S(x),\varnothing) \leadsto loop_S(x) \} \\
\cup ~ \{ alt(\varnothing,\varnothing) \leadsto \varnothing ,~ loop_S(\varnothing) \leadsto \varnothing \} \\
\end{array}
\right)
\]
\end{definition}

These rewrite rules define a TRS $\rightarrow_R$ which is convergent and preserves the semantics $\sigma$ (as per Lem.\ref{lem:intsimpl_prop}).

\begin{lemma}
\label{lem:intsimpl_prop}
The TRS characterized by $R$ is convergent and semantically sound i.e. for any $i \in \mathbb{I}$,~$\exists!~i_s \in \mathbb{I}$ s.t. $i \rightarrow^!_R i_s$ and we have $\sigma(i) = \sigma(i_s)$.
\end{lemma}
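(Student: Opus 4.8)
The claim has two independent parts: (i) $\rightarrow_R$ is convergent (every term has a unique $R$-normal form), and (ii) the normal form is semantically equivalent to the original term. For part (i), the standard route is to show that $\rightarrow_R$ is terminating and locally confluent, then invoke Newman's lemma. Termination is easy: every rule in $R$ strictly decreases the size of a term (the number of nodes in the term tree), since each right-hand side has fewer symbols than the corresponding left-hand side; hence there is no infinite rewrite sequence. For local confluence, I would check that all critical pairs are joinable. Because the rules only ever match a subterm that has $\varnothing$ or $loop_S(\cdot)$ in a specific argument slot, the overlaps are limited — e.g. $strict(\varnothing,\varnothing)$ can be rewritten to $\varnothing$ via either $f(\varnothing,x)\leadsto x$ or $f(x,\varnothing)\leadsto x$, and both give $\varnothing$; similarly $alt(\varnothing,loop_S(\varnothing))$, $cr_\ell(\varnothing,\varnothing)$, etc. Each such critical pair converges in one or two further steps, so $\rightarrow_R$ is locally confluent, hence confluent, hence convergent; and I write $i \rightarrow^!_R i_s$ for the unique normal form.

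For part (ii), semantic soundness, it suffices by induction on the length of a rewrite sequence to prove that a \emph{single} rewrite step preserves $\sigma$, i.e. if $i \rightarrow_R i'$ then $\sigma(i)=\sigma(i')$. Since rewriting happens at an arbitrary position $p$ with $i_{|p}=\phi(x)$ and $i'=i[\phi(y)]_p$, I would first establish a congruence property: if $\sigma(j)=\sigma(j')$ then replacing $j$ by $j'$ inside any context preserves $\sigma$. This follows by structural induction on the context using the operational rules of Def.\ref{def:expression} together with the evasion predicate of Def.\ref{def:evasion} — the key observation being that the $\evadesLfs{\ell}$ predicate, the non-expression predicate, and the set of available transitions of a compound term depend on its immediate subterms only through their semantics (more precisely, through their evasion behaviour and their one-step transition behaviour, both of which are semantic invariants). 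Granting congruence, it remains to verify $\sigma(\phi(x))=\sigma(\phi(y))$ for each rewrite rule $x\leadsto y$, i.e. that the "axiom schemes" are sound. These are all elementary: $\sigma(f(\varnothing,i))=\sigma(i)=\sigma(f(i,\varnothing))$ for $f\in\{strict\}\cup\bigcup_{\ell\subset L}\{cr_\ell\}$ because $\varnothing$ on either side of a sequencing/co-region operator contributes only the empty trace and imposes no ordering constraint; $\sigma(alt(\varnothing,loop_S(i)))=\sigma(loop_S(i))$ because $loop_S(i)$ already expresses $\varepsilon$ (it evades $L$), so the extra $\varnothing$ branch adds nothing; $\sigma(alt(\varnothing,\varnothing))=\{\varepsilon\}=\sigma(\varnothing)$; and $\sigma(loop_S(\varnothing))=\{\varepsilon\}=\sigma(\varnothing)$ since repeating the empty behaviour any number of times yields only $\varepsilon$.

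The main obstacle is the congruence argument for $\sigma$ — making precise that $\sigma$ is a congruence with respect to the term constructors. One must be careful because the operational rules reference $i_1$ both through its transitions ($i_1 \xrightarrow{a} i_1'$) and through the derived predicates $\evadesLfs{\ell}$, $\isPruneOf{\ell}$, $\centernot{\xrightarrow{a}}$; a fully rigorous treatment would show that all of these are determined by $\sigma$, or alternatively would prove the congruence directly by a simultaneous induction on trace length without factoring through these predicates. In the interest of brevity I would most likely appeal to the denotational characterisation of $\sigma$ from \cite{equivalence_of_denotational_and_operational_semantics_for_interaction_languages}, where each operator is interpreted as an operation on trace sets, making congruence immediate and reducing part (ii) to checking the four families of algebraic identities listed above — each of which is a one-line verification on trace sets (using that $\varnothing$ is interpreted as $\{\varepsilon\}$, strict/weak sequencing and co-region as the appropriate trace-merge products with $\{\varepsilon\}$ as neutral element, $alt$ as union, and $loop_S$ as Kleene-style iteration fixing $\varepsilon$).
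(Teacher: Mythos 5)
Your proof is correct, but it is considerably more self-contained than the one in the paper, which is essentially a two-line deferral: termination and confluence are established by citing an automated proof (using the TTT2 and CSI tools) archived alongside the artifact, and semantic soundness is dispatched with ``it suffices to prove it for interactions related by all $\leadsto \in R$, which is trivial.'' Your manual route replaces the tool citation with an elementary size-decrease argument for termination and a critical-pair analysis plus Newman's lemma for local confluence; both check out (the only genuine overlaps are $f(\varnothing,\varnothing)$ for the two-sided unit rules and $alt(\varnothing,loop_S(\varnothing))$, all joinable in at most two steps). More importantly, you make explicit the step the paper silently assumes: that proving $\sigma(\phi(x))=\sigma(\phi(y))$ for each rule only suffices if $\sigma$ is a congruence with respect to the term constructors, since rewriting occurs at arbitrary positions. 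You are right that this is the real content of part (ii), and appealing to the denotational characterisation of $\sigma$ is the cleanest way to get it; the only caveat is that the cited prior work covers $seq=cr_\emptyset$ and $par=cr_L$ but not general $cr_\ell$, so a compositional trace-set interpretation of $cr_\ell$ (interleaving on $\ell$, weak sequencing elsewhere) would need to be stated and verified to fully close the congruence argument for contexts containing co-regions. Your per-rule identities are all correct, including the observation that $alt(\varnothing,\cdot)$ can only be collapsed against $loop_S(\cdot)$ because $loop_S$ always expresses $\varepsilon$ (which is precisely why $R$ contains no general $alt(\varnothing,x)\leadsto x$ rule). In short: same decomposition as the paper, but a more elementary and more honest execution; the paper buys brevity and machine-checked confluence, yours buys a proof a reader can verify without running external tools.
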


\begin{proof}
$\rightarrow_R$ is both terminating and confluent (see automated proof in \cite{hibou_trs_simplify_empty} using the TTT2 \cite{tyrolean_termination_tool_2} and CSI \cite{csi_a_confluence_tool} tools) and hence convergent. 
To prove semantic equivalence it suffices to prove it for interactions related by all $\leadsto \in R$ which is trivial.
\end{proof}

Lem.\ref{lem:intsimpl_prop} stating that the TRS is convergent, for any $i \in \mathbb{I}$, there exists a unique $i_s$ such that $i \rightarrow^!_S i_s$.
With the index $s$ standing for ``simplified'', we use the notation $\__s$ to designate simplified terms. This allows us to consider simplified NFA generation in Def.\ref{def:simpl_translate_from_int_to_nfa}.

\begin{definition}
\label{def:simpl_translate_from_int_to_nfa}
For any $i_0 \in \mathbb{I}$, $\nfa_s(i_0) = (\mathbb{A}, Q, q_0, F, \rightsquigarrow)$ is the NFA s.t.: \\
$Q = \{ i_s \in \mathbb{I} ~|~ i \in \reachableInts(i_0) \}$, $q_0 = i_{0s}$, $F = \{ i_s ~|~ i \in Q, i \evadesLfs{L} \}$
and \\
\centerline{$\rightsquigarrow = \{(i_s,a,i'_s) ~| (i,i') \in reach(i_0)^2, a \in  \mathbb{A}, ~ i \xrightarrow{a} i' \}$}.
\end{definition}

In a few words, $\nfa_s(i_0)$ is the quotient NFA of the NFA $\nfa(i_0)$ by the equivalence relation on interactions defined by: $i \equiv i'$ iff $i_s = i'_s$. In particular, transitions between two simplified interactions $j$ and $j'$ include all transitions between interactions $i$ and $i'$ verifying $i_s = j$ and $i'_s = j'$. 
Those results can be generalized for any convergent and semantically sound Term Rewrite System for interactions\footnote{A more complete sound TRS for interactions, taking advantage of algebraic properties of operators (associativity of $strict$, $seq$, $par$, $alt$, commutativity of $alt$ and $par$, idempotency of loops etc.) is given in \cite{mahe:tel-03369906} and its convergence proven modulo AC.}. In fact, each node of the generated NFA may rather correspond to an equivalence class of (semantically equivalent) interactions according to a certain equivalence relation which preserves semantics $\sigma$.

\begin{figure}[h]
    \centering

\noindent\begin{minipage}{.49\linewidth}
        \centering

\begin{subfigure}{.975\linewidth}
\captionsetup{justification=centering}
\centering
\includegraphics[scale=.3]{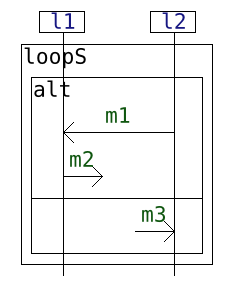}
\caption{Example interaction $i$}
\label{fig:example_int}
\end{subfigure}

\vspace*{.2cm}

\begin{subfigure}{.975\linewidth}
\captionsetup{justification=centering}
\centering
\scalebox{.7}{\begin{tikzpicture}
\node[draw,fill=white, double, double distance=2pt] (q0) {
    \begin{tikzpicture}[every node/.style = {shape=rectangle, align=center}]
\node (o) {$cr_{\{\hlf{l_2}\}}$} [level distance=0.75cm,sibling distance=1.9cm]
  child { node (o1) {$strict$} [sibling distance=.9cm]
    child { node (o11) {$\hlf{l_1}!\hms{m_1}$} }
    child { node (o12) {$\hlf{l_2}?\hms{m_1}$} }
  }
  child { node (o2) {$seq$} [sibling distance=.9cm]
    child { node (o21) {$\hlf{l_3}!\hms{m_2}$} }
    child { node (o22) {$strict$} [sibling distance=.9cm]
      child { node (o221) {$\hlf{l_3}!\hms{m_3}$} }
      child { node (o222) {$\hlf{l_2}?\hms{m_3}$} }
    }
  };
\end{tikzpicture}
};
\node[draw, circle, fill=black] (i0) at (-2.5,1.66) {};
\draw[->] (i0.east) -- (q0.west |- i0.east);
\draw[->] (q0) to [out=195,in=165,looseness=3,->]node[midway,left] {$\hlf{l_2}?\hms{m_3}$} (q0);
\node[draw,fill=white] (q2) at (4.5,-1.25) {
    \input{images/nfa/wtsimpl/term2.tex}
};
\node[draw,fill=white] (q3) at (4.5,1.25) {
    \input{images/nfa/wtsimpl/term3.tex}
};
\draw[->] (q0.east |- q2.west) -- node[midway,above] {$\hlf{l_2}!\hms{m_1}$} (q2.west);
\draw[->] (q2.north) -- node[midway,right] {$\hlf{l_1}?\hms{m_1}$} (q2.north |- q3.south);
\draw[->] (q3.west) -- node[midway,above] {$\hlf{l_1}!\hms{m_2}$} (q0.east |- q3.west);
\end{tikzpicture}}
\caption{$\nfa_s(i)$}
\label{fig:example_nfas}
\end{subfigure}
        
    \end{minipage}
    \begin{minipage}{.49\linewidth}
        \centering

        \begin{subfigure}{.975\linewidth}
        \captionsetup{justification=centering}
        \centering
        \scalebox{.7}{\begin{tikzpicture}
\node[draw, double, double distance=2pt] (q0) at (0,0) {
    \begin{tikzpicture}[every node/.style = {shape=rectangle, align=center}]
\node (o) {$cr_{\{\hlf{l_2}\}}$} [level distance=0.75cm,sibling distance=1.9cm]
  child { node (o1) {$strict$} [sibling distance=.9cm]
    child { node (o11) {$\hlf{l_1}!\hms{m_1}$} }
    child { node (o12) {$\hlf{l_2}?\hms{m_1}$} }
  }
  child { node (o2) {$seq$} [sibling distance=.9cm]
    child { node (o21) {$\hlf{l_3}!\hms{m_2}$} }
    child { node (o22) {$strict$} [sibling distance=.9cm]
      child { node (o221) {$\hlf{l_3}!\hms{m_3}$} }
      child { node (o222) {$\hlf{l_2}?\hms{m_3}$} }
    }
  };
\end{tikzpicture}
};
\node[draw, circle, fill=black] (i0)  at (-2.5,1.5) {};
\draw[->] (i0.east) -- (q0.west |- i0.east);
\node[draw, double, double distance=2pt] (q2) at (4.25,1) {
    \input{images/nfa/nosimpl/term2.tex}
};
\node[draw] (q3) at (-.5,-4) {
    \input{images/nfa/nosimpl/term3.tex}
};
\node[draw] (q4) at (3.5,-4.25) {
    \input{images/nfa/nosimpl/term4.tex}
};
\node[draw, double, double distance=2pt] (q5) at (5.5,-1.5) {
    \input{images/nfa/nosimpl/term5.tex}
};
\draw[->] (q0.east |- q2.west) -- node[midway,above] {$\hlf{l_2}?\hms{m_3}$} (q2.west);
\draw[->] (q2) to [out=345,in=15,looseness=3.75,->]node[midway,right] {$\hlf{l_2}?\hms{m_3}$} (q2);
\draw[->] (q3.north |- q0.south) -- node[midway,left] {$\hlf{l_2}!\hms{m_1}$} (q3.north);
\draw[->] (q3.east |- q4.west) -- node[midway,above] {$\hlf{l_1}?\hms{m_1}$} (q4.west);
\draw (q4.east) -- (q5.south |- q4.east);
\draw[->] (q5.south |- q4.east) -- node[midway,right] {$\hlf{l_1}!\hms{m_2}$} (q5.south);
\draw[->] (q5.120) -- node[midway,right] {$\hlf{l_2}?\hms{m_3}$} (q5.120 |- q2.south);
\draw[->] (q4) -- node[midway,right] {$\hlf{l_2}!\hms{m_1}$} (q0);
\draw[->] (q5) -- node[midway,above] {$\hlf{l_2}!\hms{m_1}$} (q0);
%
%
%
%
\end{tikzpicture}}
        \caption{$\nfa(i)$}
        \label{fig:example_nfa}
        \end{subfigure}
        
    \end{minipage}

    \caption{Example of a translation}
    \label{fig:example}

\end{figure}

Fig.\ref{fig:example} illustrates this process on an example and the advantage of using term simplification.
On Fig.\ref{fig:example_nfa}, we represent the NFA $\nfa(i)$ associated to the example interaction $i$ drawn on Fig.\ref{fig:example_int}.
We deliberately chose a very simple interaction so that the associated NFA is small and easy to visualize.
Fig.\ref{fig:example_nfas} gives the simplified $\nfa_s(i)$ obtained via the simplification of redundant occurrences of $\varnothing$.
The use of these simplifications allows a reduction of $2$ states (from $5$ in $\nfa(i)$ to $3$ in $\nfa_s(i)$).


\section{Experiments\label{sec:experiments}}

We have implemented NFA generation from interactions via the $\nfa_s$ function in the tool HIBOU \cite{hibou_label}. Simplifications (Def.\ref{def:simp_varnothing}) are systematically applied to newly encountered interactions when building the automaton (the object $\nfa(i)$ e.g. from Fig.\ref{fig:example_nfa} is never built and we directly construct $\nfa_s(i)$ e.g. from Fig.\ref{fig:example_nfas}).

The following presents three sets of experiments which main results are resp. given on Fig.\ref{fig:locks_nfa_gen}, Fig.\ref{fig:locks_nfa_compo} and Fig.\ref{fig:trace_analysis_perf_table}.
The experiments were performed using an Intel(R) Core(TM)i5-6360U CPU (2.00GHz) with 8GB RAM. 
Details of the experiments and the artifacts required to reproduce them are available in \cite{hibou_nfa_generation,hibou_nfa_trace_analysis} and in the appendices.

\textbf{Assessing the reduction of generated NFA.}
As a first step, to assess the size of a generated NFA we compare its number of states to that of equivalent minimal NFA and DFA computed using implementations of resp. the Kameda-Weiner \cite{on_the_state_minimization_of_nondeterministic_finite_automata} and Brzozowski's algorithms in the AUTOUR \cite{autour_tool} toolbox.

So as to highlight the importance of the NFA formalism w.r.t. DFA, let us consider a digital lock mechanism as a first use case.
To open the lock, we have to enter $6$ digits, the first $3$ of them forming a specific sequence $w$, while the last $3$ may be random.
Encoded as an interaction, it consists of a single lifeline $\hlf{l}$, which can receive as inputs combinations of $\hms{a}$ and $\hms{b}$ characters. Once the lock is unlocked, it emits a $\hms{u}$ message. Hence, given $L = \{\hlf{l}\}$, $M = \{\hms{a},\hms{b},\hms{u}\}$ this specification accepts traces corresponding, given alphabet $\mathbb{A}$, to the regular expression: $(\hlf{l}?\hms{a}|\hlf{l}?\hms{b})^*.w.(\hlf{l}?\hms{a}|\hlf{l}?\hms{b})^3.\hlf{l}!\hms{u}$.
This system with a single lock can be used to define systems with $n > 1$ locks in which before opening a lock, we might need to unlock first several others.
For instance, let us consider a system with $2$ locks $\hlf{l_1}$ and $\hlf{l_2}$ such that $\hlf{l_1}$ must be unlocked first.
Its behavior can be characterized using $seq(i_{\hlf{l_1}},seq(strict(\hlf{l_1}!\hms{u},\hlf{l_2}?\hms{u}),i_{\hlf{l_2}}))$ where $i_{\hlf{l_1}}$ and $i_{\hlf{l_2}}$ resp. correspond to combinations required to open $\hlf{l_1}$ and $\hlf{l_2}$.

On Fig.\ref{fig:locks_nfa_gen}, we summarize experiments performed on 3 systems with resp. 1, 4 and 8 locks (each example corresponding to a row). The $|Q|~nfa_s$, $|Q|~min_{NFA}$ and $|Q|~min_{DFA}$ columns resp. correspond to the number of states of the generated NFA and equivalent minimal NFA and DFA. The column on their right gives the time required to generate or minimize the FA. 
We can observe that determinization increases the number of states exponentially. NFA minimization yields a NFA with the same number of states for the smaller example and quickly becomes untractable for the larger ones.

\begin{figure}[h]
    \centering

\begin{subfigure}{\textwidth}
    \centering
\begin{tabular}{|l|c|c|c|c|c|c|}
\hline 
&
$|Q|$ $\nfa_s$
&
\faClock ~ $\nfa_s$
&
$|Q|$ $min_{NFA}$
&
\faClock ~ $min_{NFA}$
&
$|Q|$ $min_{DFA}$
&
\faClock ~ $min_{DFA}$
\\
\hline 
1 lock
&
$8$
&
$230~\mu$s
&
$8$
&
$30$s
&
$14$
&
$130~\mu$s
\\
\hline 
4 locks
&
$105$
&
$7400~\mu$s
&
\textcolor{black}{N/A
}
&
\textcolor{black}{timeout}
&
$312$
&
$8000~\mu$s
\\
\hline 
8 locks
&
$2881$
&
$1$s
&
\textcolor{black}{N/A
}
&
\textcolor{black}{timeout}
&
$16274$
&
$2$s
\\
\hline 
\end{tabular}

    \caption{Comparing $\nfa_s$ with equivalent minimal NFA and DFA}
    \label{fig:locks_nfa_gen}
\end{subfigure}

\vspace*{.25cm}

\begin{subfigure}{\textwidth}
    \centering
\begin{tabular}{|l|c|c|c|c|c|c|}
\hline 
&
$|Q|$ $\nfa_s$
&
\faClock ~ $\nfa_s$
&
$|Q|$ $\nfacompo$
&
\faClock ~ $\nfacompo$
&
$|Q|$ $min_{DFA}$
&
\faClock ~ $min_{DFA}$
\\
\hline 
1 lock
&
$8$
&
$230~\mu$s
&
$13$
&
$370~\mu$s
&
$14$
&
$120~\mu$s
\\
\hline 
\makecell{4 locks $s$\&$p$}
&
$97$
&
$3500~\mu$s
&
$223$
&
$7300~\mu$s
&
$298$
&
$8100~\mu$s
\\
\hline 
\makecell{8 locks $s$\&$p$}
&
$1624$
&
$0.38$s
&
$5904$
&
$0.15$s
&
$9374$
&
$2.3$s
\\
\hline 
\end{tabular}

    \caption{Comparing $\nfa_s$ with compositional NFA $\nfacompo$ generation method}
    \label{fig:locks_nfa_compo}
\end{subfigure}
    
    \caption{Usecases with networked digital locks}
    \label{fig:locks_usecases}
\end{figure}

\textbf{Comparison with compositional NFA generation.}
In order to compare our approach with those of the literature \cite{model_checking_of_message_sequence_charts,runtime_monitoring_of_web_service_conversations}, we have also implemented an alternative ``compositional'' method for NFA generation.
It involves \textbf{(1)} identifying the maximal ``basic interactions'' (i.e. with only $strict$ and $seq$) sub-terms of an initial interaction $i$, \textbf{(2)} generating a NFA for each one using our approach and \textbf{(3)} composing them using NFA operators depending on the structure of $i$ to obtain a certain $\nfacompo(i)$ NFA. 

Because there is no NFA operator for weak sequencing, a limitation of this compositional approach (which we further discuss in Sec.\ref{sec:gen_nfa}) is that $i$ cannot have $seq$ outside ``basic interactions''.
So as to apply it to our examples, we therefore need to replace the $seq$ between the locks with either $strict$ or $par$. Doing so requires identifying which locks are executed concurrently resp. strictly sequentially (which is normally done by $seq$ organically depending on the exchange of $\hms{u}$ messages between locks).
By manually replacing $seq$, we obtain a different specification which expresses a slightly different language.
We can then compare both approaches ($\nfa_s$ and $\nfacompo$) on these new $s$\&$p$ specifications.
Results are presented on Fig.\ref{fig:locks_nfa_compo}.
A NFA obtained via $\nfacompo$ has fewer states than the minimal DFA but still much more states than the NFA obtained via $\nfa_s$.

\textbf{Distributed system use cases and trace analysis.}
In this context, trace analysis simply corresponds to assessing whether or not a trace $t \in \mathbb{T}$ belongs to the semantics $\sigma(i)$ of an interaction $i$.
In previous works \cite{revisiting_semantics_of_interactions_for_trace_validity_analysis}, we used interactions directly to perform trace analysis.
Indeed, the relation $\rightarrow$ allows trying to re-enact a trace $t$ from an initial $i$, and, if this is possible, then $t \in \sigma(i)$.
As an alternative, we can leverage the generated NFA to verify $t \in \mathcal{L}(\nfa_s(i))$ which simply consists in trying to read the word in the NFA.
The major difference is that this equates to using pre-calculated interaction terms (the reachable simplified interactions) instead of computing them as needed.

\begin{figure}[h]
    \centering

\begin{tabular}{|l|c|c|c|c|c|c|c|c|}
\hline 
&
\multicolumn{2}{c}{NFA}
&
\multicolumn{2}{|c}{Traces}
&
\multicolumn{4}{|c|}{Rates in $a.s^{-1}$}
\\
\hline
&
$|Q|$
&
\faClock
&
$T$
&
$t$ sizes
&
\faClock ~ $i$ {\scriptsize \textcolor{darkspringgreen}{\faCheck}}
&
\faClock ~ $nfa_s$ {\scriptsize \textcolor{darkspringgreen}{\faCheck}}
&
\faClock ~ $i$ {\scriptsize \textcolor{red}{\faTimes}}
&
\faClock ~ $nfa_s$ {\scriptsize \textcolor{red}{\faTimes}}
\\
\hline 
ABP \cite{high_level_message_sequence_charts}
&
$68$
&
$7300\mu$s
&
$80$
&
$[250;10000]$
&
$24000$
&
$413000$
&
$20000$
&
$407000$
\\
\hline 
Platoon \cite{DupontSPARTA20}
&
$90$
&
$2900\mu$s
&
$40$
&
$[250;5000]$
&
$98000$
&
$430000$
&
$87000$
&
$325000$
\\
\hline 
HR \cite{software_engineering_for_dapp_smart_contracts_managing_workers_contracts}
&
$101$
&
$4300\mu$s
&
$40$
&
$[250;5000]$
&
$53000$
&
$440000$
&
\textcolor{black}{timeout}
&
$351000$
\\
\hline 
\end{tabular}
    
    \caption{
    Trace analysis using either interaction directly or generated NFA}
    \label{fig:trace_analysis_perf_table}

\end{figure}

Fig.\ref{fig:trace_analysis_perf_table} summarizes experiments conducted on interactions taken from the literature: 
a model of the Alternating Bit Protocol \cite{high_level_message_sequence_charts}, 
a connected platoon of autonomous Rovers \cite{DupontSPARTA20} 
and a DApp system for managing Human-Resources \cite{software_engineering_for_dapp_smart_contracts_managing_workers_contracts}.
Each use case corresponds to a row on Fig.\ref{fig:trace_analysis_perf_table}.
The second and third columns give the number of states of the generated NFA and the time required to compute it.
For each use case, we have generated a number of accepted traces and an equal number of error traces. Column $T$ gives this number while column $t$ gives the minimal and maximal size of generated traces. 
For each accepted trace (resp. error trace), both the interaction and NFA methods (respectively denoted as $i$ and $\nfa_s$) return a $Pass$ (resp. $Fail$) verdict.
The last four columns of the table correspond to rates (in actions per second) associated with the analysis of traces using either method on either accepted ({\scriptsize \textcolor{darkspringgreen}{\faCheck}}) or error ({\scriptsize \textcolor{red}{\faTimes}}) traces. As might be expected, empirical evidence goes in favor of using generated NFA for trace analysis.
For some long error traces in the HR usecase \cite{software_engineering_for_dapp_smart_contracts_managing_workers_contracts} analysis using interactions may even timeout.
Plotting the data reveals that, with the method from \cite{revisiting_semantics_of_interactions_for_trace_validity_analysis} the average rate decreases with the length of traces while this is not the case for the NFA-based method.
However, let us keep in mind that not all interactions can be translated into NFA (e.g. using weakly sequential loops from \cite{equivalence_of_denotational_and_operational_semantics_for_interaction_languages}).

\section{Comments and related work\label{sec:related_works}}

\begin{wrapfigure}{r}{5cm}
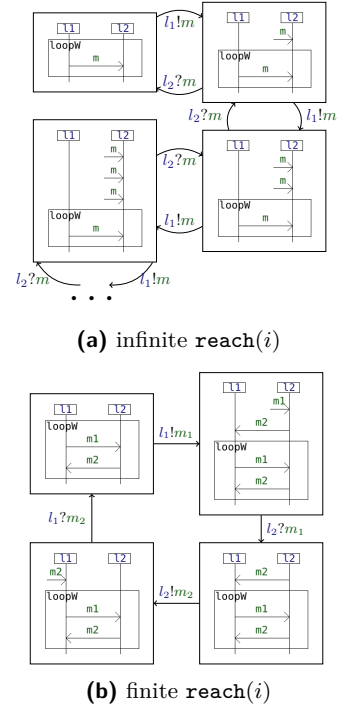

    \centering
    
    \begin{subfigure}{5cm}
        \captionsetup{justification=centering}
        \centering
        \raisebox{0pt}[\dimexpr\height-4.25\baselineskip\relax]{
        \scalebox{.6}{\input{images/loopW/unbound/graph}}
        }
        \caption{infinite $\reachableInts(i)$}
        \label{fig:loopW_reach_infite}
    \end{subfigure}

    \vspace*{.2cm}

    \begin{subfigure}{5cm}
        \captionsetup{justification=centering}
        \centering
        \scalebox{.6}{\input{images/loopW/bounded/graph}}
        \caption{finite $\reachableInts(i)$}
        \label{fig:loopW_reach_finite}
    \end{subfigure}    
        
    \caption{Boundedness \& $loop_W$}
    \label{fig:exloopW}
\end{wrapfigure}

\textbf{Repetition and boundedness.}
In addition to $loop_S$, which is related to strict sequencing $strict$, \cite{equivalence_of_denotational_and_operational_semantics_for_interaction_languages} considered $loop_W$ and $loop_P$ which are repetitions (Kleene closures) using resp. weak sequencing $seq$ and interleaving $par$.
These loops are problematic for the definition of a translation towards NFA. Indeed, in contrast to $loop_S$, with either $loop_W$ or $loop_P$, it is possible that arbitrarily many instances of the loop are active at the same time (i.e. have started but have not yet been entirely executed).
Fig.\ref{fig:loopW_reach_infite} illustrates this with a $loop_W$ which enables several instances of messages $\hms{m}$ to be sent by $\hlf{l_1}$ consecutively without any being received by $\hlf{l_2}$. 
That is $\sigma(i)$ is the set of words $w$ in $\{a= \hlf{l_1}!\hms{m}, b = \hlf{l_2}?\hms{m} \}^*$ verifying $|w|_a = |w|_b$ and for all prefixes $u$ of $w$, $|u|_a \geq |u|_b$. 
Using the pumping lemma from \cite{pumping_lemma_MadhusudanM01}, one can state that there is no NFA recognizing this language.
Another manner to consider $loop_W$ would be to restrict its use to a subclass of ``bounded interactions'' in the same fashion as ``bounded MSC graphs'' \cite{MuschollPS98,a_theory_of_regular_msc_languages,model_checking_of_message_sequence_charts,pumping_lemma_MadhusudanM01} (see also process divergence in \cite{syntaxtic_detection_of_process_divergence_and_non_local_choice_in_message_sequence_charts}) as illustrated in Fig.\ref{fig:loopW_reach_finite}. Here, because lifeline $\hlf{l_1}$ must wait for the reception of $\hms{m_2}$ before being able to emit a second instance of message $\hms{m_1}$, there are only $4$ distinct global states in which the system may be. Consequently, at most $1$ message can be in transit between $\hlf{l_1}$ and $\hlf{l_2}$. Boundedness \cite{MuschollPS98,a_theory_of_regular_msc_languages,model_checking_of_message_sequence_charts} generalizes this observation, characterizing MSC graphs in which there can only be a finite amount of ``buffered'' messages in transit between any two lifelines. Note that deciding boundedness comes at some computational cost (NP-hard, see~\cite{MuschollPS98}). 

\textbf{FA generation from interactions.}
Several works \cite{model_checking_of_message_sequence_charts,runtime_monitoring_of_web_service_conversations} propose translations of interactions into automata. In~\cite{model_checking_of_message_sequence_charts}, basic Message Sequence Charts (bMSC) are described using partial order relations induced by send-receive pairs and local orders of actions on each lifeline. A bMSC contains a sequence of send-receive pairs and corresponds, in our framework, to a sequence of patterns of the form
$strict(\hlf{l_1}!\hms{m},\hlf{l_2}?\hms{m})$ arranged with $seq$.
bMSC are then combined to form graphs of MSC (MSC-g) in order to support more complex behaviors. Nodes of a MSC-g contain bMSC and its edges, linking two bMSC correspond to a concatenation of the behaviors specified by each.
This concatenation can either be interpreted as $strict$ or $seq$ which resp. corresponds to the synchronous and asynchronous interpretation in \cite{model_checking_of_message_sequence_charts}.
For the former, MSC-g are associated to NFA by linearizing the partial orders of each bMSC node and by concatenating the resulting NFA (one per node) along the graph edges.
Besides, they point out that the problem of building a NFA from a MSC-g with the asynchronous interpretation is undecidable.
\cite{runtime_monitoring_of_web_service_conversations} proceeds in a somewhat similar fashion with UML-SD.
Basic SD (only containing message passing and $seq$) are identified within the overall SD and individually transformed into NFA. Those NFA are then composed following the structure of the SD. For this, operators of UML-SD are mapped to operators on NFA. For instance, they map $alt$ to NFA union and $loop$ to NFA Kleene star (hence it corresponds to our strictly sequential loop $loop_S$). Because there is no NFA operator equivalent to $seq$, the use of $seq$ is restricted to basic SD. This method corresponds to the $\nfacompo$ used in our experiments in Sec.\ref{sec:experiments}.

We propose translating any interaction $i$ into an equivalent NFA which states correspond to (sets of) interactions reachable from $i$ and transitions are derived from execution steps $i' \xrightarrow{a} i''$.
In our approach, every operator is handled in the same way and hence, unlike \cite{model_checking_of_message_sequence_charts,runtime_monitoring_of_web_service_conversations}, the $seq$ operator can be used above complex operators such as $alt$, $par$ or $loop_S$.

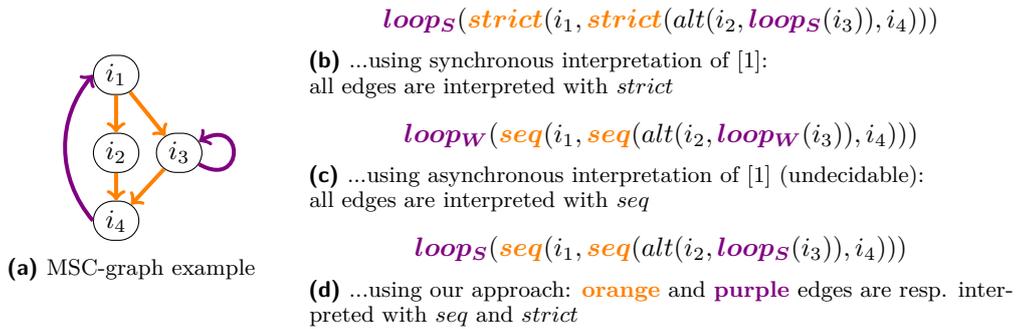
\begin{figure}[h]
    \centering

\begin{minipage}{.275\textwidth}
    \begin{subfigure}{.975\linewidth}
        \centering

\begin{tikzpicture}
\node[draw,rounded rectangle] (i1) at (0,0) {$i_1$};
\node[draw,rounded rectangle,below=0.5cm of i1] (i2) {$i_2$};
\node[draw,rounded rectangle,below right=0.5cm and .75cm of i1] (i3) {$i_3$};
\node[draw,rounded rectangle,below=1.4cm of i1] (i4) {$i_4$};
\draw (i1) edge[->,orange,line width=1.5pt] (i2);
\draw (i1) edge[->,orange,line width=1.5pt] (i3);
\draw[->,violet,line width=1.5pt] (i3) to [out=330,in=30,looseness=6,->] (i3);
\draw (i2) edge[->,orange,line width=1.5pt] (i4);
\draw (i3) edge[->,orange,line width=1.5pt] (i4);
\draw (i4.180) edge[bend left = 40,->,violet,line width=1.5pt] (i1.180);
\end{tikzpicture}
        
        \caption{MSC-graph example}
        \label{fig:msc_graph_example}
    \end{subfigure}
\end{minipage}
\begin{minipage}{.675\textwidth}

    \begin{subfigure}{.975\linewidth}
        \centering
        $\shortColViolet{\bm{loop_S}}(\shortColOrange{\bm{strict}}(i_1,\shortColOrange{\bm{strict}}(alt(i_2,\shortColViolet{\bm{loop_S}}(i_3)),i_4)))$
        \caption{...using synchronous interpretation of \cite{model_checking_of_message_sequence_charts}: \\
        all edges are interpreted with $strict$}
        \label{fig:msc_graph_equivalent_strict}
    \end{subfigure}

    \vspace*{.25cm}
    
    \begin{subfigure}{.975\linewidth}
        \centering
        $\shortColViolet{\bm{loop_W}}(\shortColOrange{\bm{seq}}(i_1,\shortColOrange{\bm{seq}}(alt(i_2,\shortColViolet{\bm{loop_W}}(i_3)),i_4)))$
        \caption{...using asynchronous interpretation of \cite{model_checking_of_message_sequence_charts} (undecidable): \\
        all edges are interpreted with $seq$}
        \label{fig:msc_graph_equivalent_weak}
    \end{subfigure}

    \vspace*{.25cm}
    
    \begin{subfigure}{.975\linewidth}
        \centering
        $\shortColViolet{\bm{loop_S}}(\shortColOrange{\bm{seq}}(i_1,\shortColOrange{\bm{seq}}(alt(i_2,\shortColViolet{\bm{loop_S}}(i_3)),i_4)))$
        \caption{...using our approach: \shortColOrange{\textbf{orange}} and \shortColViolet{\textbf{purple}} edges are resp. interpreted with $seq$ and $strict$}
        \label{fig:msc_graph_equivalent_our}
    \end{subfigure}
    
\end{minipage}
    
    \caption{MSC-graph example and equivalent interactions}
    \label{fig:msc_graph_interpretation}
\end{figure}

Fig.\ref{fig:msc_graph_example} provides an example MSC-g built over $4$ bMSC.
Under the synchronous interpretation of graph edges, we would translate this MSG-g into the interaction term from Fig.\ref{fig:msc_graph_equivalent_strict}.
If we were to consider the asynchronous interpretation, using the weak sequential loop operator $loop_W$ from \cite{equivalence_of_denotational_and_operational_semantics_for_interaction_languages}, we would translate it as the term from Fig.\ref{fig:msc_graph_equivalent_weak}.
Unlike the former, the latter cannot be translated into a NFA, as stated by the non-decidability result from \cite{model_checking_of_message_sequence_charts}.
However, with our approach, we can provide NFA translation with a greater expressivity, as underlined with the third interpretation of the graph given on Fig.\ref{fig:msc_graph_equivalent_our}.
Indeed, using $loop_S$ and $seq$, we may interpret distinct edges of the graph either synchronously or asynchronously.
Moreover, this allows using $seq$ at any scope within the specification, unlike \cite{runtime_monitoring_of_web_service_conversations} where it must be used within basic SD.

The principle of our approach lies much closer to that of \cite{partial_derivatives_of_regular_expressions_and_finite_automata_contructions,generating_optimal_monitors_for_extended_regular_expressions} in which FA are build from regular expressions.
\cite{generating_optimal_monitors_for_extended_regular_expressions} extends upon \cite{testing_extended_regular_language_membership_incrementally_by_rewriting} in which, for any regular expression and any letter, a unique derivative expression is defined (in effect constituting a deterministic operational semantics of regular expressions).
\cite{generating_optimal_monitors_for_extended_regular_expressions} uses circular coinduction as a decision procedure for the equivalence of regular expressions. It then leverages the unicity of derivatives in \cite{testing_extended_regular_language_membership_incrementally_by_rewriting} to construct a DFA which states correspond to circularities of regular expressions (i.e. groups of regular expressions with the same language) and transitions to the application of derivations.
By contrast, in \cite{partial_derivatives_of_regular_expressions_and_finite_automata_contructions}, partial derivatives are used on regular expressions written in linear form so as to obtain small NFA (with a number of states no greater than the alphabetic width of the regular expression).

The size of NFA generated from basic interactions (with only $seq$ and message passing via $strict$) is, in the worst case, $O(n^k)$ where $n$ is the number of message passing patterns and $k$ is the number of lifelines~\cite{model_checking_of_message_sequence_charts,runtime_monitoring_of_web_service_conversations,complexity_interleaving_words_MAYER1994,complexity_regx_interleaving_Gelade10}. Such interactions resembles basic MSC~\cite{model_checking_of_message_sequence_charts,runtime_monitoring_of_web_service_conversations} or regular expressions with interleaving over finite words~\cite{complexity_interleaving_words_MAYER1994}. 
The witness interaction $par(\hlf{l_1}!\hms{m},par(\ldots,\hlf{l_n}!\hms{m})\ldots)$ consisting of the interleaving of $n$ distinct atomic actions accepts as traces all possible interleavings of those $n$ actions. Any NFA associated to it must have at least $2^n$ states.
Hence, the cost of NFA generation with $par$ is $\Omega(2^n)$, which relates to results established for regular expressions with interleaving \cite{complexity_interleaving_words_MAYER1994,complexity_regx_interleaving_Gelade10}.

\textbf{FA and trace analysis.}
FA are extensively used for trace analysis in the domain of monitoring or runtime verification. However, they are mostly generated from logics such as Linear Temporal Logic (LTL)~\cite{efficient_monitoring_of_omega_langauge,lossy_trace_JoshiTF17,lossy_trace_KauffmanHF19} rather than interaction models. \cite{efficient_monitoring_of_omega_langauge} focuses on generating FA monitors that recognize the minimal bad prefixes of $\omega$-languages (sets of infinite traces). These monitors are created by reducing Büchi Automata into Binary Transition Tree Finite State Machines (BTT-FSM). Finite trace analysis against the generated FSM is performed online (as can we using our generated NFA). 
Overall, automata are preferred in runtime verification due to their compact form, efficient memory usage, and faster processing. They offer various advantages for trace analysis, particularly for distributed systems dealing with unreliable communication channels~\cite{lossy_trace_KauffmanHF19} or lossy traces due to network-related issues~\cite{lossy_trace_JoshiTF17}.

\section{Conclusion\label{sec:conclusion}}

We have proposed an approach for generating Non-deterministic Finite Automata (NFA) from Interactions.
It employs 
derivatives and term simplification to generate concise NFA on-the-fly. 
Our approach is validated through proofs and experiments with several interactions from the literature, showcasing the reduced quality of generated NFA (no need for costly NFA reduction machinery) and their suitability for trace analysis.

\bibliography{biblio/biblio,biblio/biblio_fa_minimization,biblio/biblio_int_to_fa,biblio/biblio_monitoring}

\clearpage 

\appendix

\section{NFA generation experiments on digital locks usecases\label{anx:nfagen}}

\subsection{Comparison to minimal DFA}

A known example for showing that an equivalent minimal DFA may have exponentially more state that the associated minimal NFA is the following:
given the alphabet $\{a,b\}$, the regular expression $(a|b)^*.a.(a|b)^n$ can be encoded as a NFA with $n+2$ states while the associated minimal DFA has $2^{n+1}$ states. This is due to the fact that this construction must keep track of the last $n$ states.

\begin{figure}[h]
    \centering

\begin{tabular}{|c|c|c|}
\hline
\Large\textbf{$i$}
&
\Large\textbf{$nfa_s(i)$}
&
\Large\textbf{$min_{DFA}(nfa_s(i))$}
\\
\hline 
\includegraphics[scale=.3]{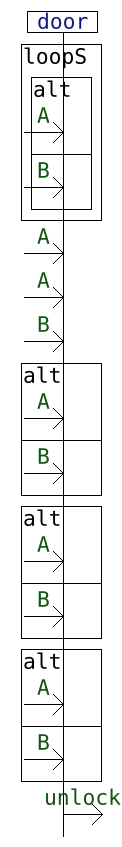}
&
\includegraphics[scale=.35]{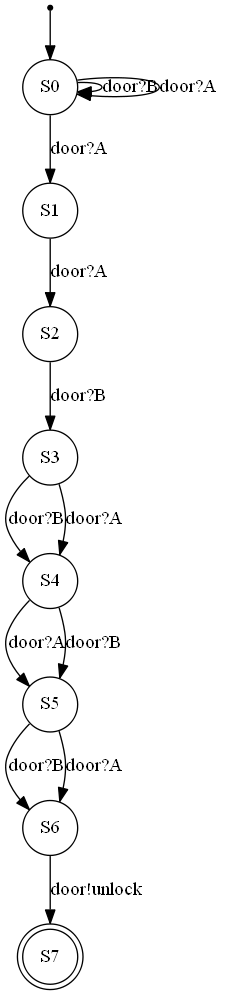}
&
\includegraphics[scale=.275]{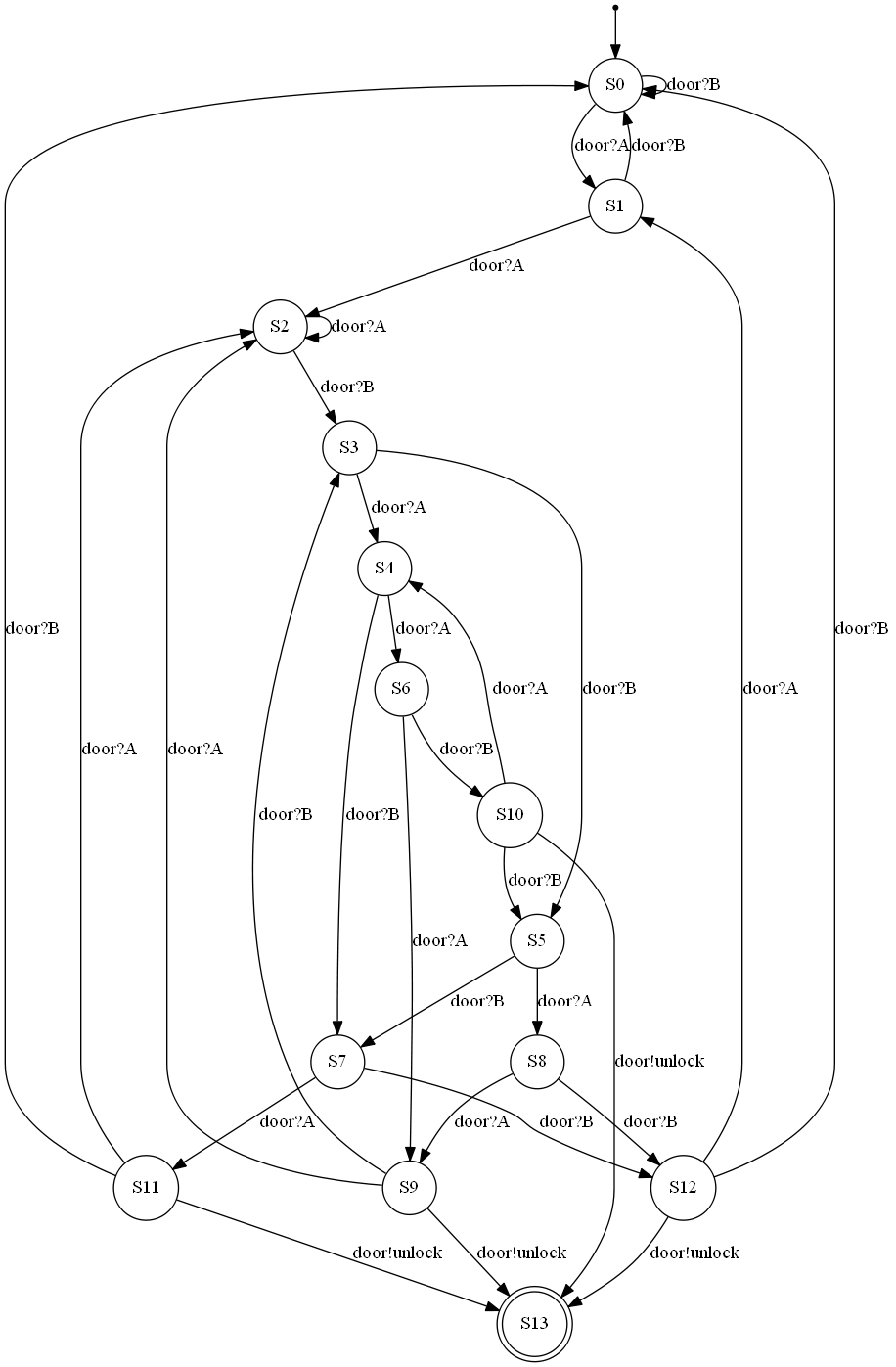}
\\
\hline 
\end{tabular}
    
    \caption{Example with a single digital lock}
    \label{fig:1lock}
\end{figure}

Taking inspiration from this let us consider the following toy usecase. We consider a door which is initially closed. It can be opened by entering a combination of $A$ and $B$ of the form $A.A.B.\#.\#.\#$ i.e. the code is $AAB$ followed by any three characters. In order to open the door we may therefore enter any word of the regular language $(A|B)^*.A.A.B.(A|B)^3$.

In order to encode this specification as an interaction, we consider the door to be a lifeline called $\hlf{\mathtt{door}}$ which can receive $\hms{\mathtt{A}}$ and $\hms{\mathtt{B}}$ messages as inputs and, once it is unlocked, it emits message $\hms{\mathtt{unlock}}$. This interaction is drawn on the left column of Fig.\ref{fig:1lock}.

Using interaction execution with term simplification to explore its semantic with memoization of previously encountered terms we obtain the NFA $\nfa_s(i)$ drawn in the middle column of Fig.\ref{fig:1lock}.
This NFA turns out to be the minimal NFA which can express this language. As expected, if we determinize it and then minimize the resulting DFA we get a FA with more states (here $14$ states instead of $8$). This DFA is given on the right column of Fig.\ref{fig:1lock}.

\begin{figure}[h]
    \centering

\begin{tabular}{|c|c|}
\hline 
Network structure & generated NFA\\
\hline 
\makecell{
\includegraphics[width=.3\textwidth]{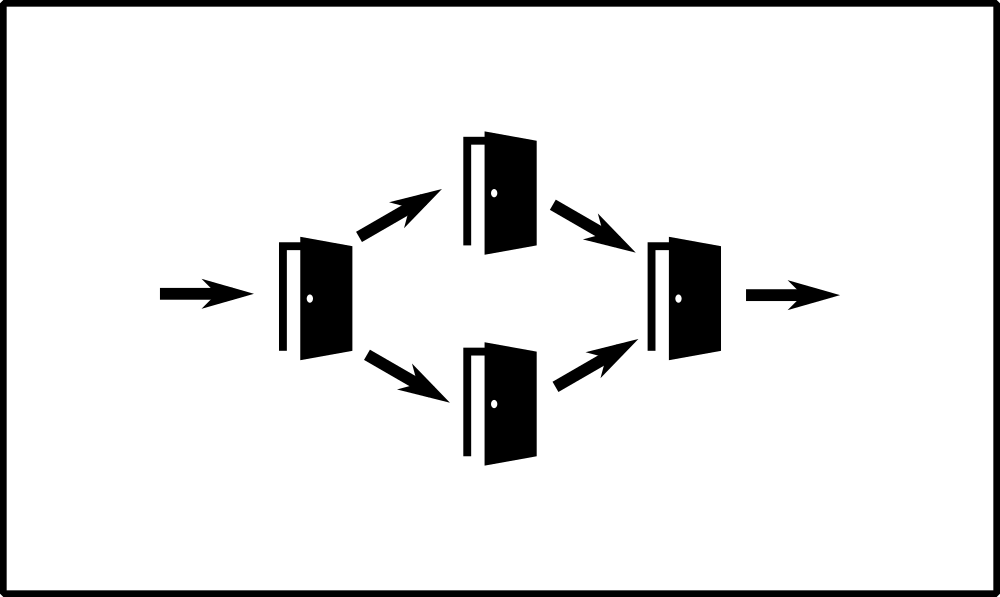}
}
&
\makecell{\includegraphics[width=.6\textwidth]{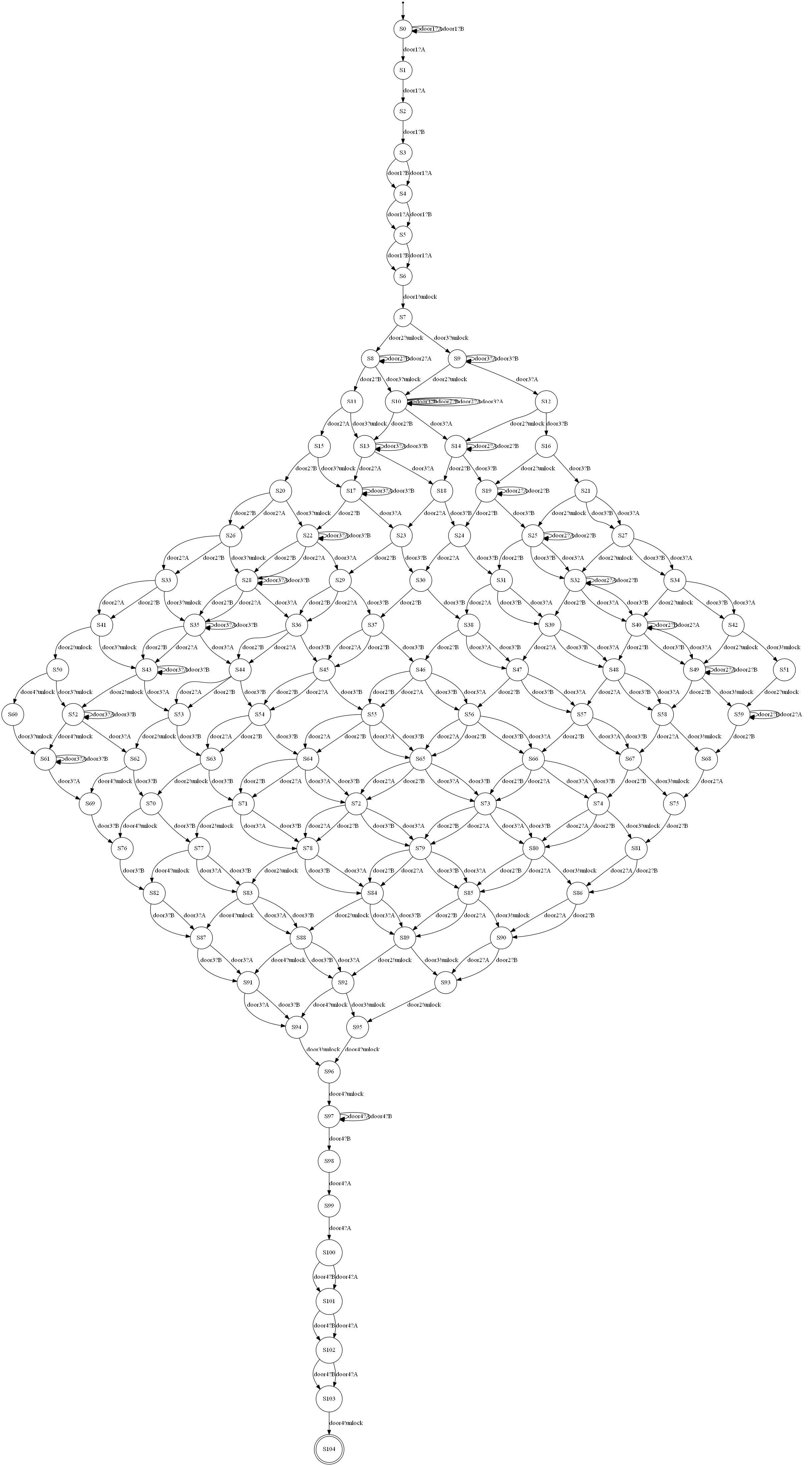}}
\\
\hline 
\end{tabular}

    \caption{NFA generated from a network with $4$ digital locks}
    \label{fig:4locks}
\end{figure}

As a distributed/concurrent system, this usecase can be described as a single door having inputs and outputs. More complex structure of linked doors can be formed, with the unlocking of a door allowing another one to be opened and so on.
The left of Fig.\ref{fig:4locks} describes one such system with $4$ locks.
Opening the first one allows trying to enter the codes for the 2 following doors. Then, we need to open these two doors to start trying to open the final door. 
On the right of Fig.\ref{fig:4locks} the generated NFA corresponding to this specification is given.

\begin{figure}[h]
    \centering

\begin{tabular}{|c|c|}
\hline 
Network structure & generated NFA\\
\hline 
\makecell{\includegraphics[width=.3\textwidth]{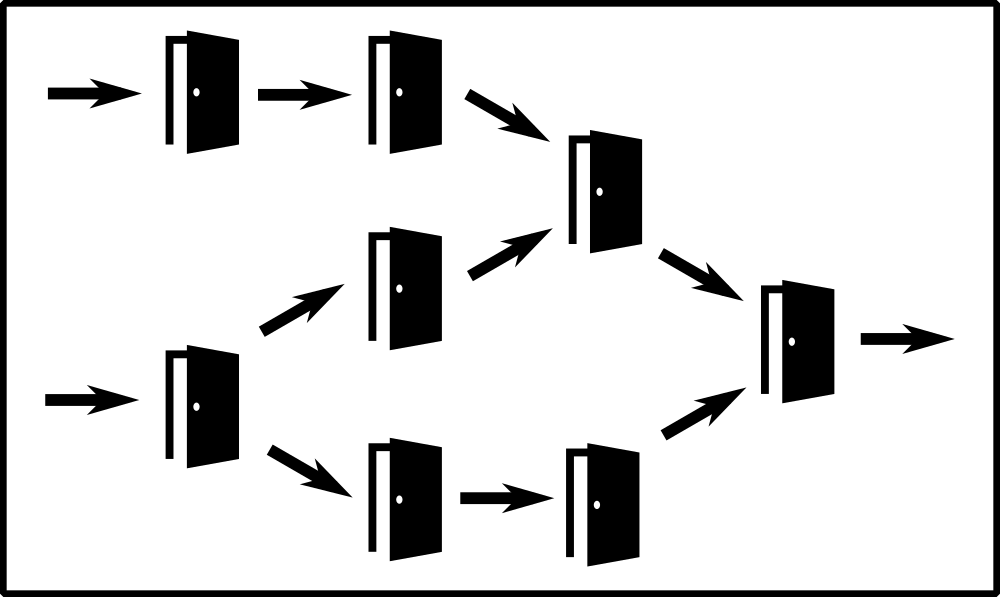}}
&
\makecell{
\Huge not drawn\\
\Huge$2881$ states}
\\
\hline 
\end{tabular}

    \caption{NFA generated from a network with $8$ digital locks}
    \label{fig:8locks}
\end{figure}

Fig.\ref{fig:8locks} illustrates our second network example, which involves $8$ locks. The generated NFA, which is not drawn for a lack of space, has $2881$ states.

\clearpage

\subsection{Comparison with compositional approach}

\begin{figure}[h]
    \centering

\begin{tabular}{|c|c|}
\hline 
\makecell{$nfa_s(i)$}
&
\makecell{$compo(i)$}
\\
\hline 
\makecell{\includegraphics[scale=.275]{appendix/nfa_gen/secret_code_orig_nfa.png}}
&
\makecell{\includegraphics[scale=.275]{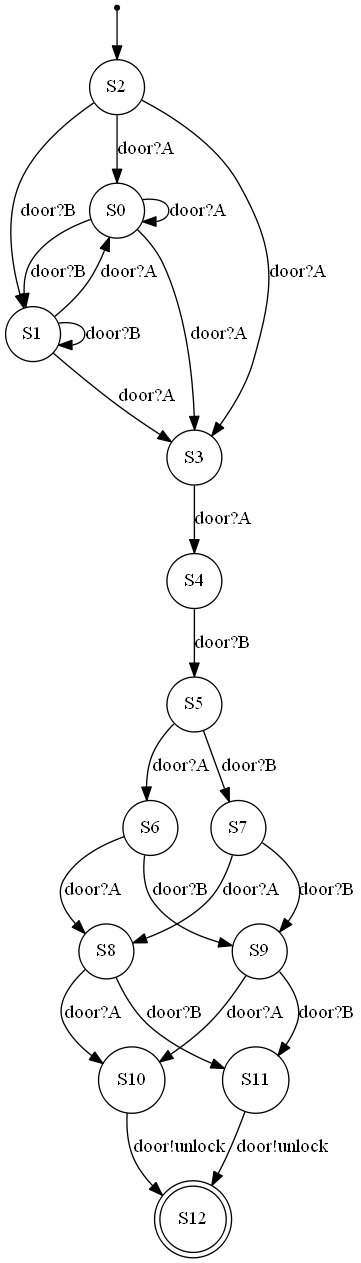}}
\\
\hline 
\end{tabular}
    
    \caption{Comparison of generated NFA for the example with 1 lock using our method and a compositional method}
    \label{fig:1lock_compo}
\end{figure}

We also compare the NFA generated using our method with NFA obtained using a compositional method. Fig.\ref{fig:1lock_compo} represents the NFA generated using both methods for our example with $1$ lock. While $\nfa_s(i)$ has $8$ states, $\nfacompo(i)$ has $13$ states.
We can remark that $\nfacompo(i)$ has less states than an equivalent minimal DFA ($14$ states as seen on Fig.\ref{fig:1lock}) but still more than $\nfa_s(i)$ which has the same number of states ($8$) as the equivalent minimal NFA.

\begin{figure}[h]
    \centering

\begin{tabular}{|c|c|c|}
\hline 
\makecell{$nfa_s(i)$}
&
\makecell{$compo(i)$}
\\
\hline 
\makecell{\includegraphics[width=.3\textwidth]{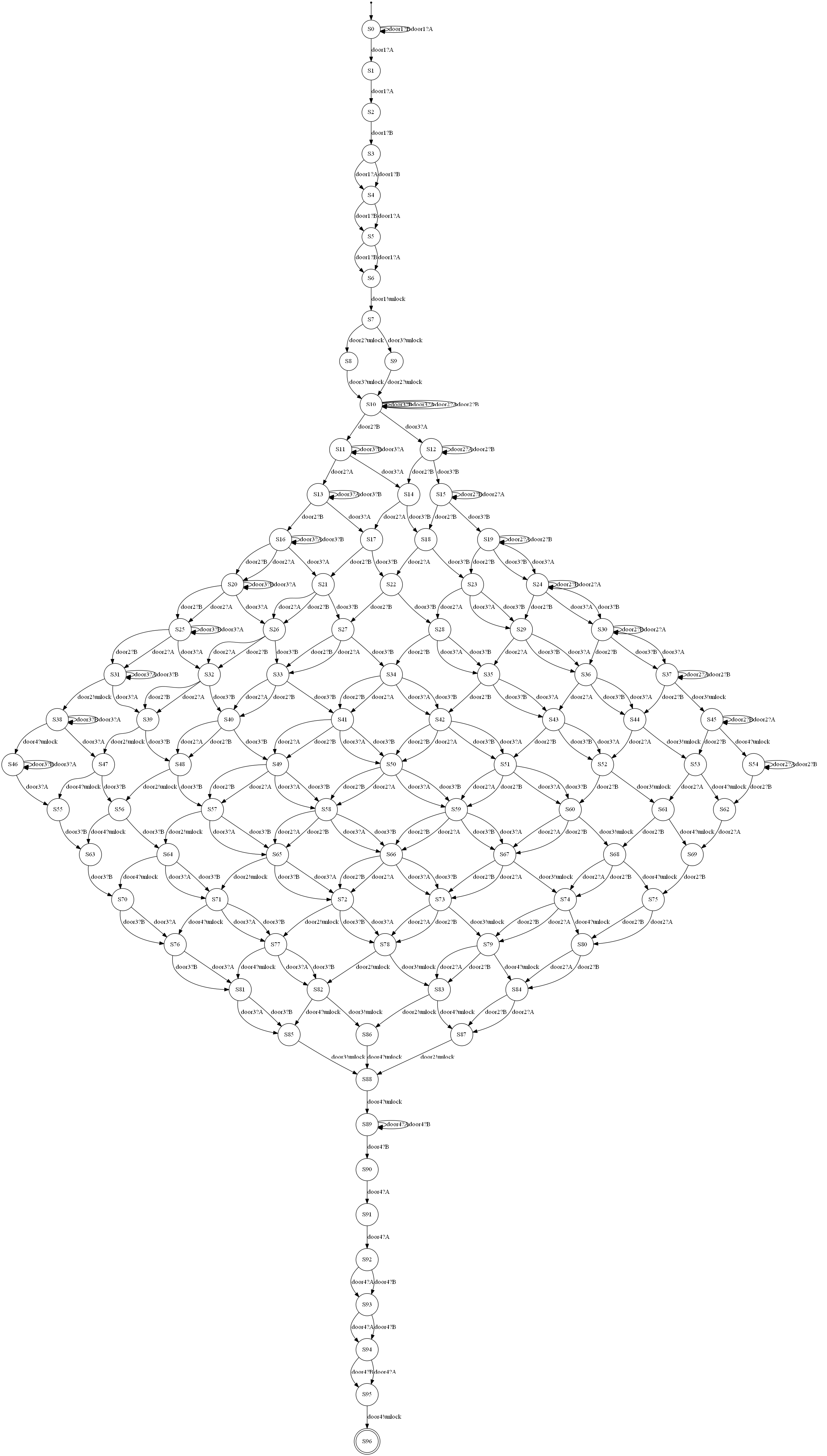}}
&
\makecell{\includegraphics[width=.6\textwidth]{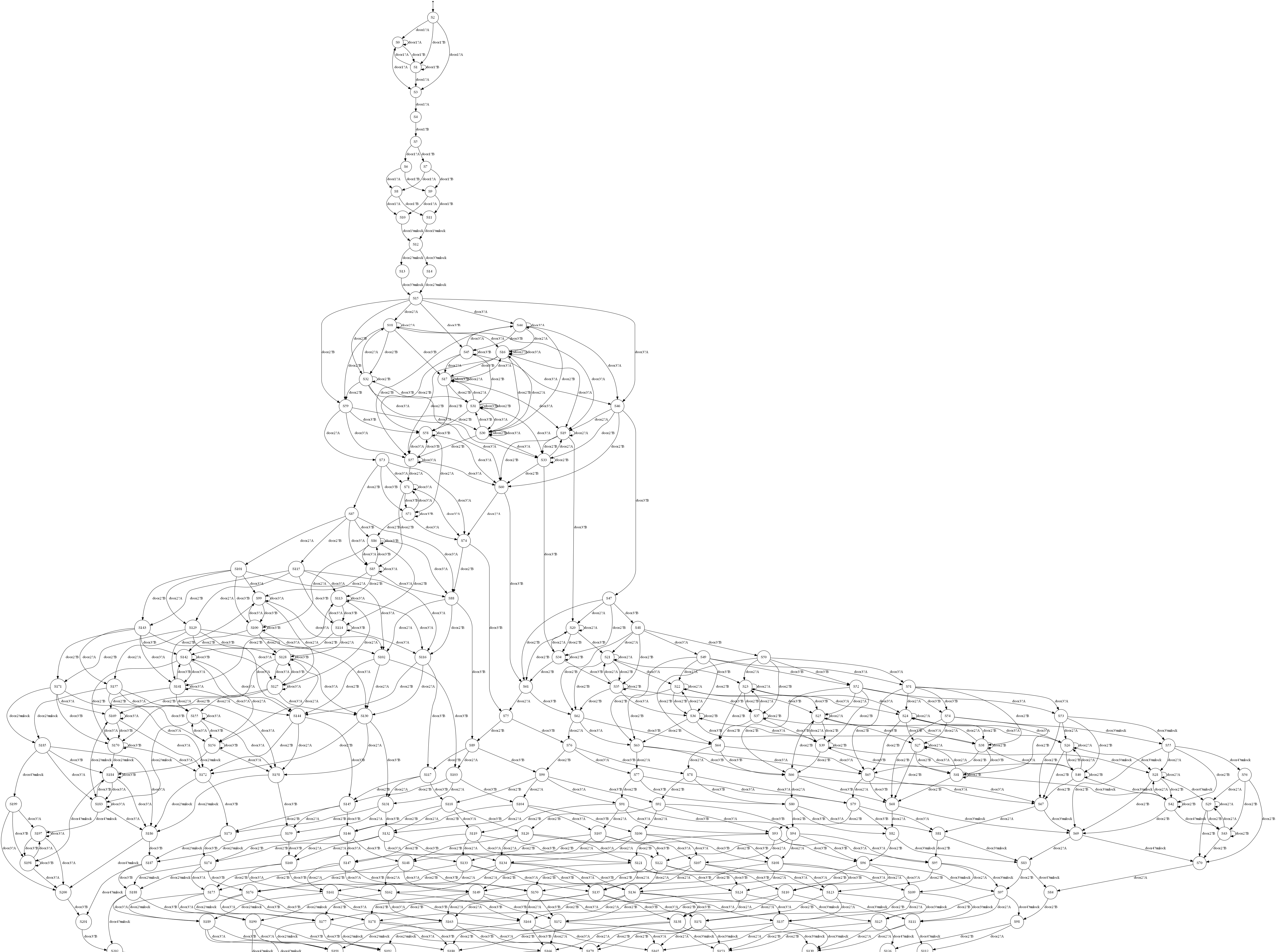}}
\\
\hline 
\end{tabular}
    
    \caption{Comparison of generated NFA for the example with $4$ locks and substituting $seq$ with either $strict$ or $par$ using our method and a compositional method.}
    \label{fig:strict_and_par_4locks_nfagen}
\end{figure}

Fig.\ref{fig:strict_and_par_4locks_nfagen} likewise represents $\nfa_s(i)$ and $\nfacompo(i)$ for the example network with $4$ locks. While the former has $97$ states, the latter has $223$ states.

\clearpage

\section{Trace analysis experiments}

In the following we consider several usecases. Each usecase is modelled by an interaction which specifies its set of accepted behaviors.

Using the trace generation feature of HIBOU \cite{hibou_label} we generate three datasets of traces, one for each example usecase. 

Each dataset consists of:
\begin{itemize}
    \item a set of accepted prefixes
    \item a set of error traces obtained by adding random actions at the end of accepted traces
\end{itemize}

We then analyze these traces against the specifying interaction using two methods:
\begin{itemize}
    \item the trace analysis algorithm from \cite{revisiting_semantics_of_interactions_for_trace_validity_analysis} which directly analyze traces against interactions via the execution relation $\rightarrow$
    \item a simple NFA word analysis technique applied to the NFA generated using $\nfa_s$
\end{itemize}

We compare the results of analyzing the traces using both methods and make sure that:
\begin{itemize}
    \item for the accepted prefixes, both methods return a Pass
    \item for the error traces, both methods return a Fail
\end{itemize}

\subsection{The Platoon \cite{DupontSPARTA20} usecase}

\begin{figure}[h]
    \centering

\begin{tabular}{|c|c|}
\hline
\Large\textbf{$i$}
&
\Large\textbf{$nfa_s(i)$}
\\
\hline 
\makecell{
\includegraphics[width=.4\textwidth]{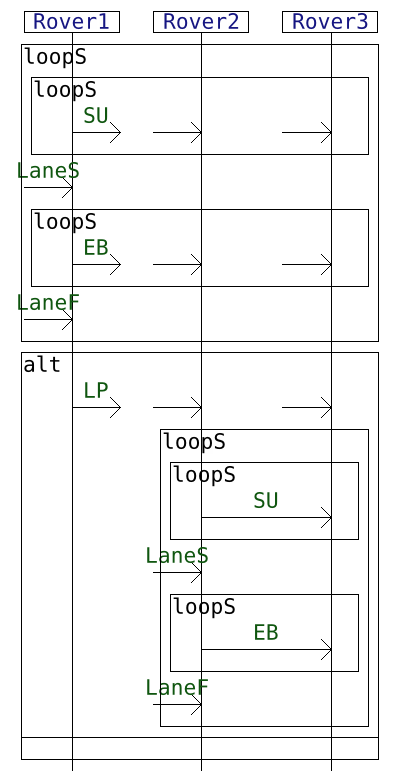}
}
&
\makecell{
\includegraphics[width=.575\textwidth]{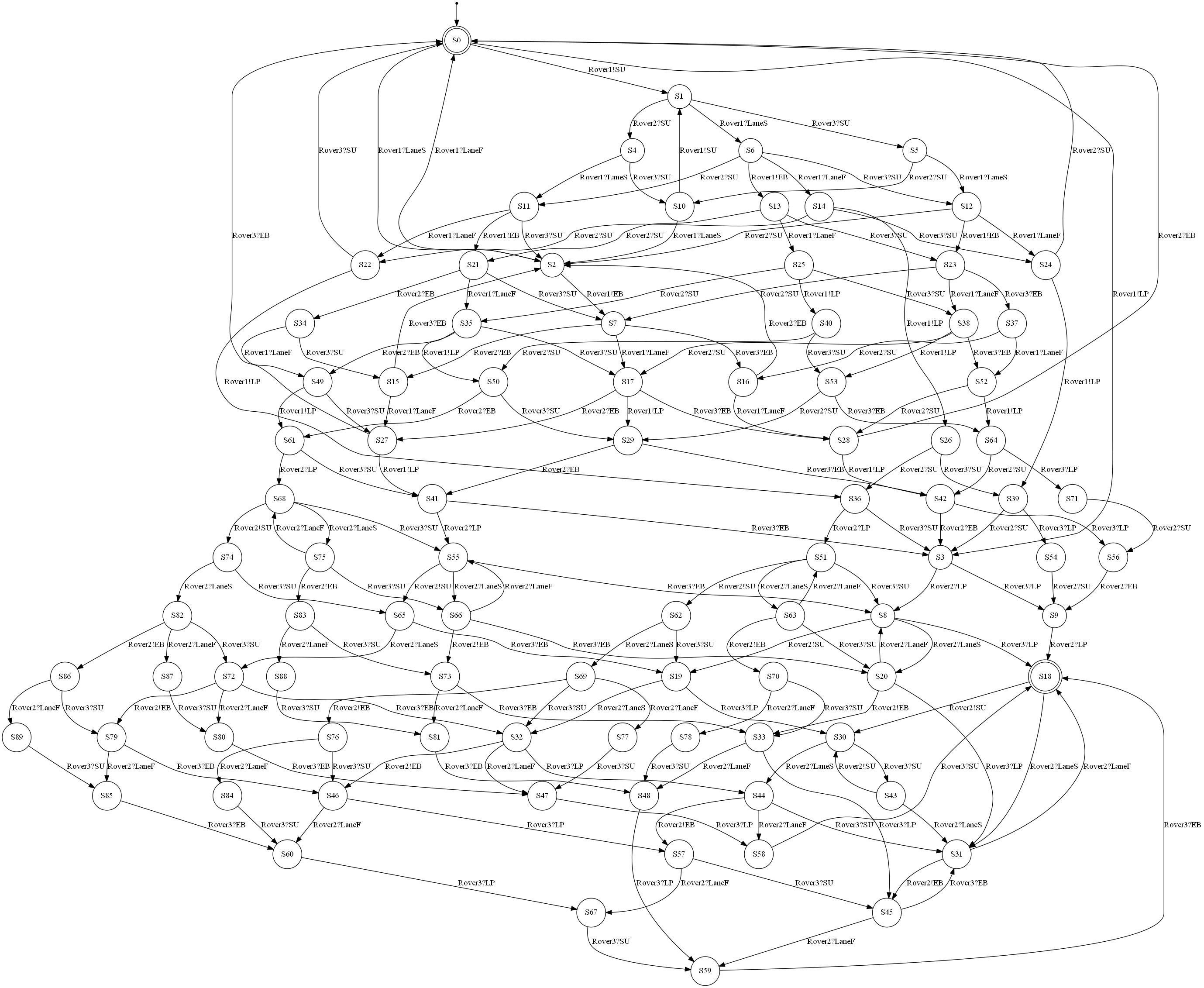}
}
\\
\hline 
\end{tabular}
    
    \caption{Platoon \cite{DupontSPARTA20} usecase}
    \label{fig:rover_usecase_desc}
\end{figure}

The interaction on the left of Fig.\ref{fig:rover_usecase_desc} describes the Platoon \cite{DupontSPARTA20} usecase.
It consists of three self-driving rovers that drive in formation.
The NFA generated from this interaction is drawn on the right of Fig.\ref{fig:rover_usecase_desc}.

Fig.\ref{fig:rover_acp} plots the results of trace analysis for the accepted prefixes and Fig.\ref{fig:rover_err} for the error traces. In these plots, each point corresponds to a given trace. Its position corresponds to the time taken to analyze it (on the $y$ axis), and its length i.e. total number of actions (on the $x$ axis, with some jitter added to better see distinct points). The time on the $y$ axis is given on a logarithmic scale.
The color represents the method used, blue for interaction global trace analysis and green for nfa word analysis.

\begin{figure}[h]
    \centering
    \includegraphics[width=.9\textwidth]{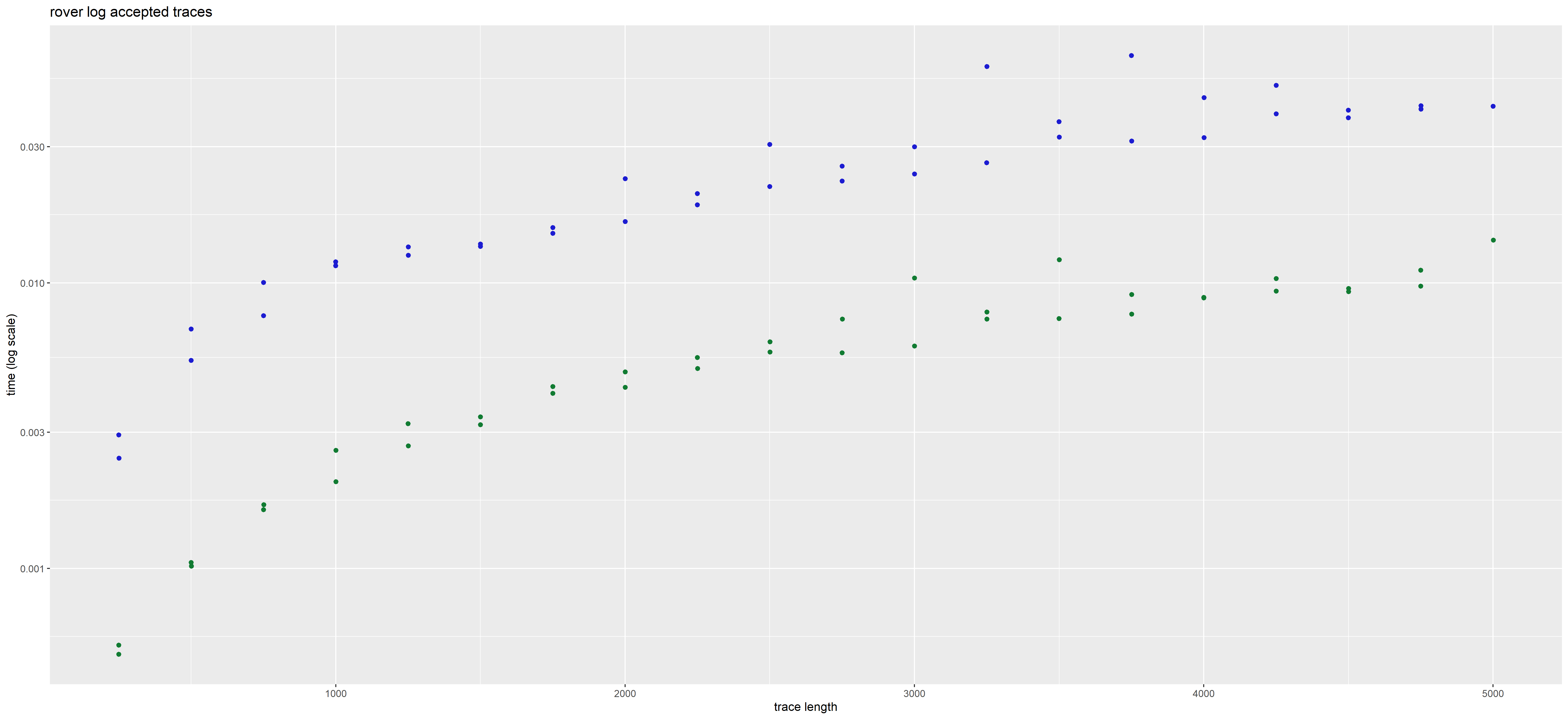}
    \caption{Accepted traces for the Platoon \cite{DupontSPARTA20} usecase}
    \label{fig:rover_acp}
\end{figure}

\begin{figure}[h]
    \centering
    \includegraphics[width=.9\textwidth]{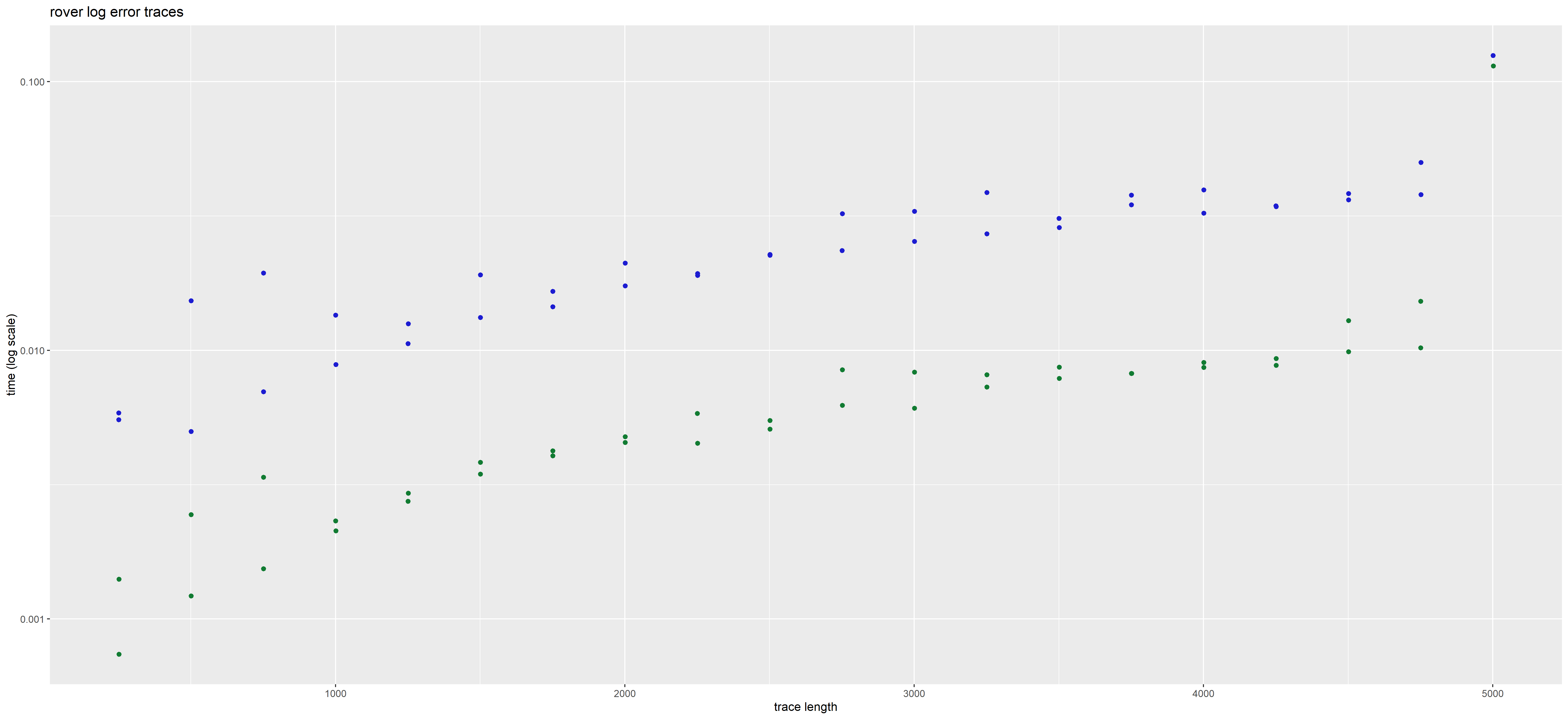}
    \caption{Error traces for the Platoon \cite{DupontSPARTA20} usecase}
    \label{fig:rover_err}
\end{figure}

\clearpage

\subsection{Human Resources \cite{software_engineering_for_dapp_smart_contracts_managing_workers_contracts} usecase}

\begin{figure}[h]
    \centering

\begin{tabular}{|c|c|}
\hline
\Large\textbf{$i$}
&
\Large\textbf{$nfa_s(i)$}
\\
\hline 
\makecell{
\includegraphics[width=.4\textwidth]{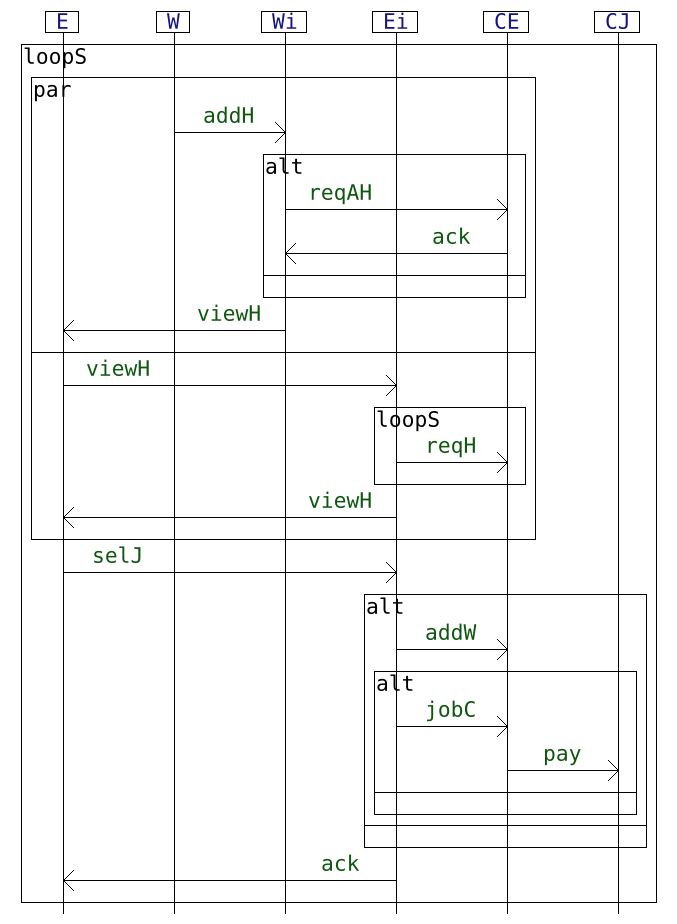}
}
&
\makecell{
\includegraphics[width=.575\textwidth]{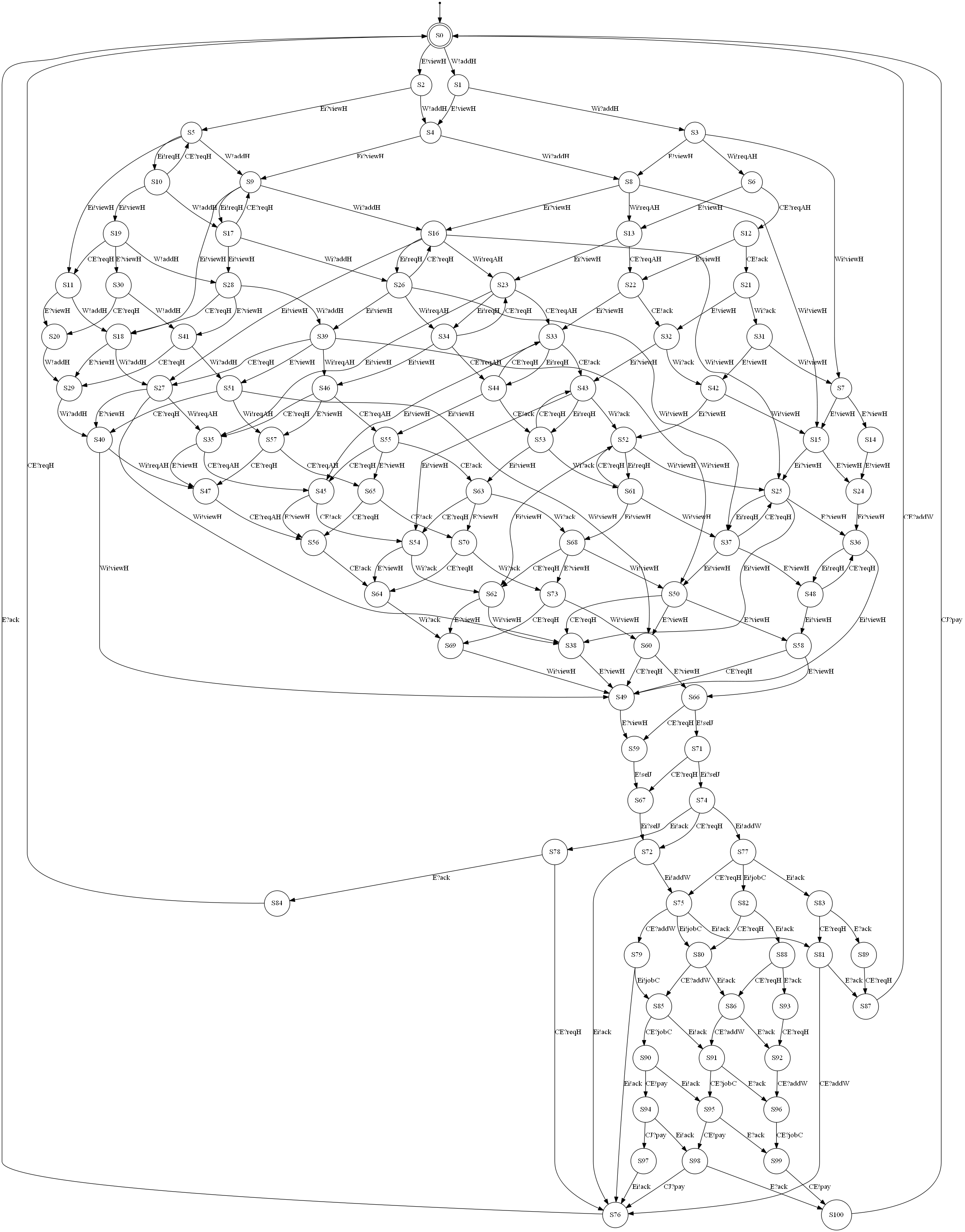}
}
\\
\hline 
\end{tabular}
    
    \caption{Human Resources \cite{software_engineering_for_dapp_smart_contracts_managing_workers_contracts} usecase}
    \label{fig:hr_usecase_desc}
\end{figure}

Similarly, while Fig.\ref{fig:hr_usecase_desc} describes the expected behaviors of the Human Resources \cite{software_engineering_for_dapp_smart_contracts_managing_workers_contracts} usecase, experimental results for trace analysis are given on Fig.\ref{fig:hr_acp} (for accepted traces) and Fig.\ref{fig:hr_err} (for error traces).

\begin{figure}[h]
    \centering
    \includegraphics[width=.9\textwidth]{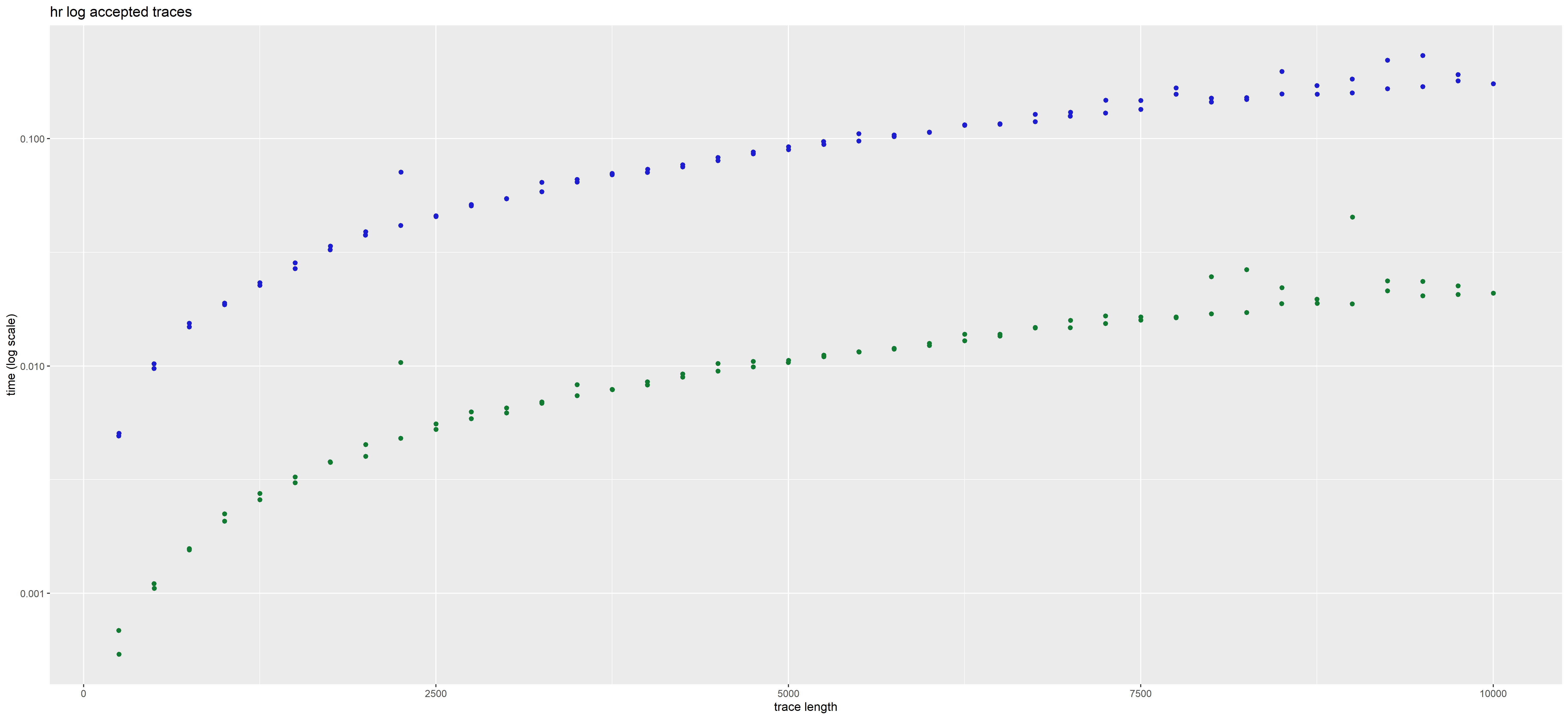}
    \caption{Accepted traces for the Human Resources \cite{software_engineering_for_dapp_smart_contracts_managing_workers_contracts} usecase}
    \label{fig:hr_acp}
\end{figure}

\begin{figure}[h]
    \centering
    \includegraphics[width=.9\textwidth]{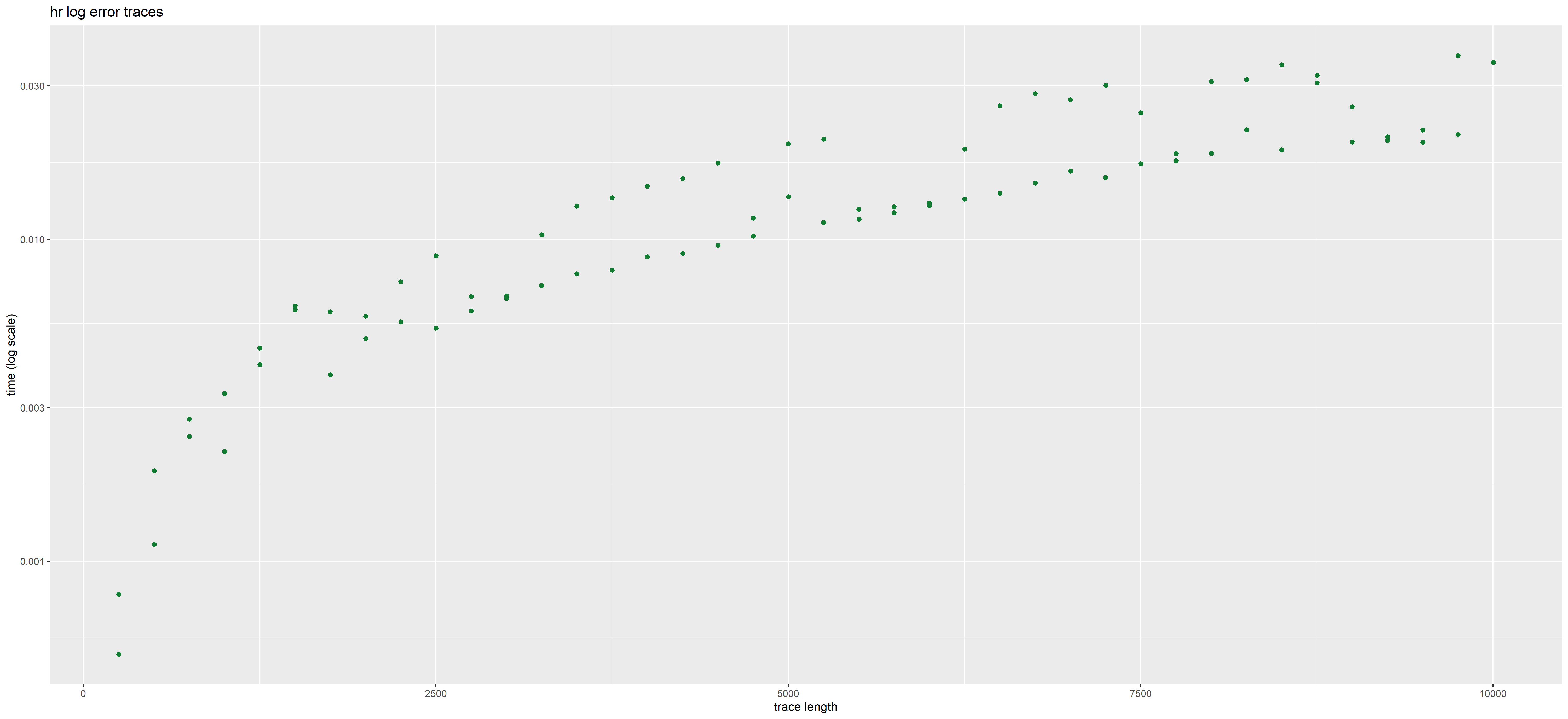}
    \caption{Error traces for the Human Resources \cite{software_engineering_for_dapp_smart_contracts_managing_workers_contracts} usecase}
    \label{fig:hr_err}
\end{figure}

\clearpage

\subsection{Alternating Bit Protocol \cite{high_level_message_sequence_charts} usecase}

\begin{figure}[h]
    \centering

\begin{tabular}{|c|c|}
\hline
\Large\textbf{$i$}
&
\Large\textbf{$nfa_s(i)$}
\\
\hline 
\makecell{
\includegraphics[width=.25\textwidth]{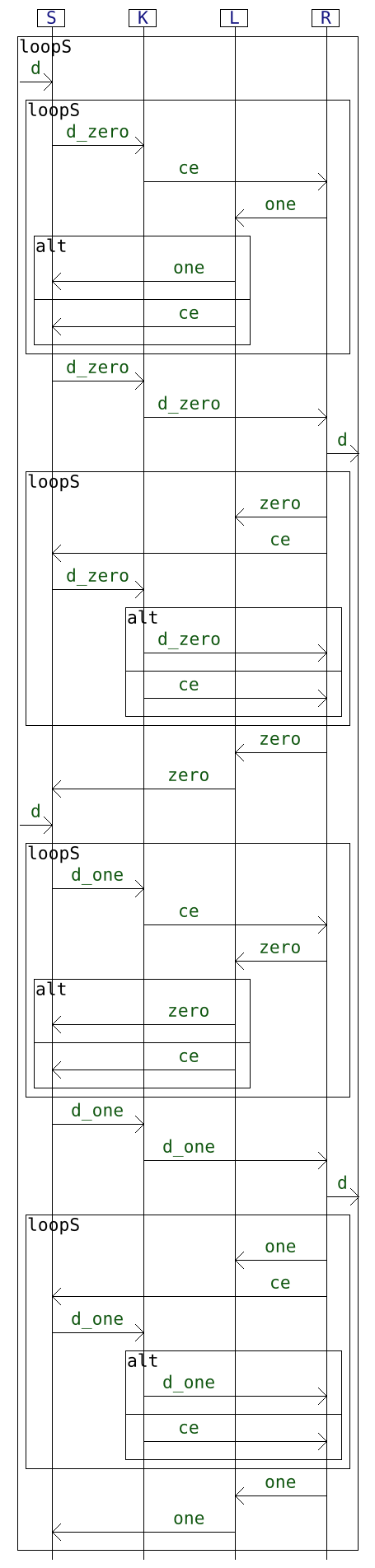}
}
&
\makecell{
\includegraphics[width=.25\textwidth]{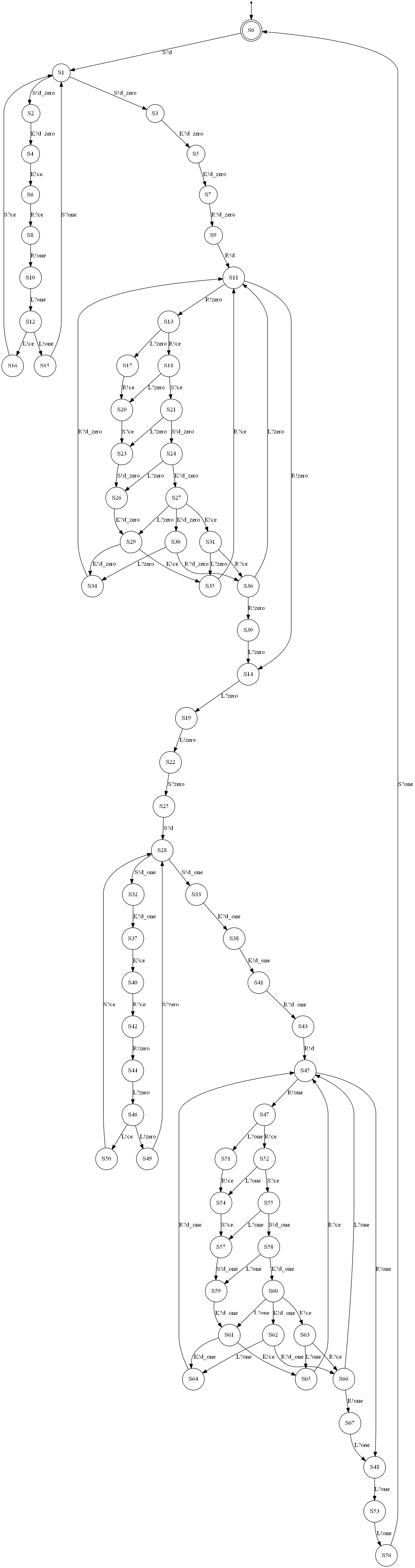}
}
\\
\hline 
\end{tabular}
    
    \caption{Alternating Bit Protocol \cite{high_level_message_sequence_charts} usecase}
    \label{fig:abp_usecase_desc}
\end{figure}

Finally the Alternating Bit Protocol \cite{high_level_message_sequence_charts} usecase is described on Fig.\ref{fig:abp_usecase_desc}.
Fig.\ref{fig:abp_acp} and Fig.\ref{fig:abp_err} resp. present experimental results for the accepted traces and error traces. 

\begin{figure}[h]
    \centering
    \includegraphics[width=.9\textwidth]{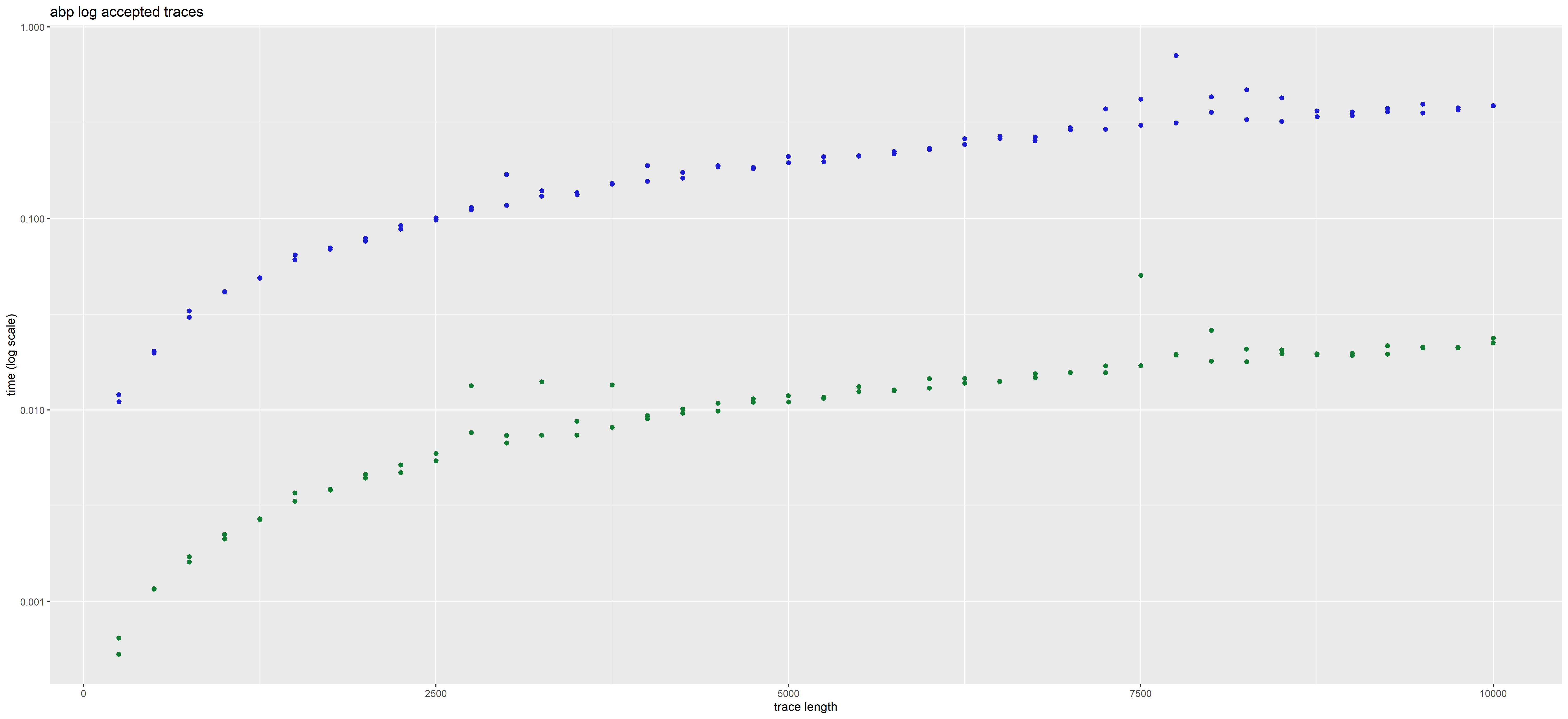}
    \caption{Accepted traces for the ABP \cite{high_level_message_sequence_charts} usecase}
    \label{fig:abp_acp}
\end{figure}

\begin{figure}[h]
    \centering
    \includegraphics[width=.9\textwidth]{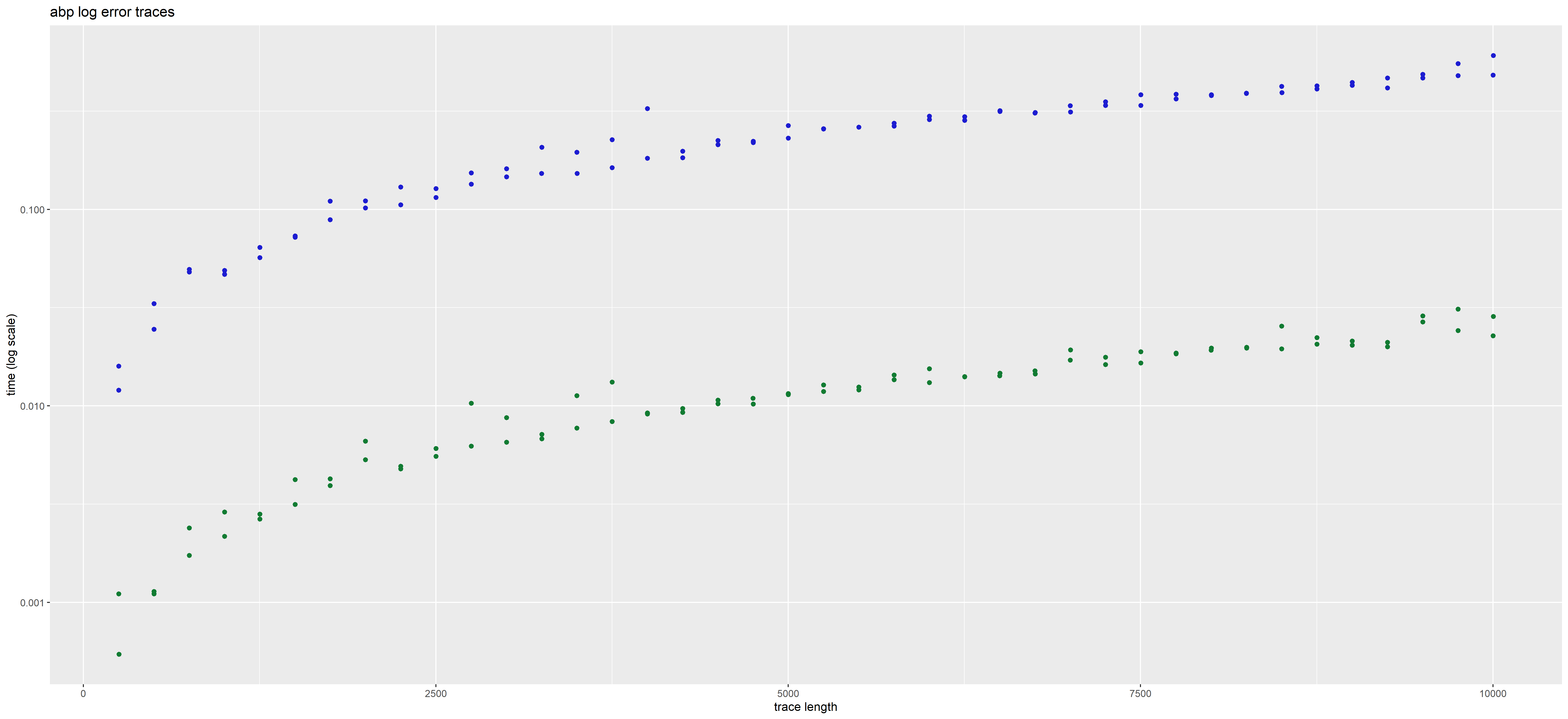}
    \caption{Error traces for the ABP \cite{high_level_message_sequence_charts} usecase}
    \label{fig:abp_err}
\end{figure}

\end{document}